\DeclareMathOperator*{\argmin}{arg\,min}
\newcommand{\sign}{\operatorname{sign}}
\theoremstyle{plain}
\newtheorem{lemma}{Lemma}[section]
\newtheorem{proposition}{Proposition}[section]
\newtheoremstyle{uprightremark}
  {}                             
  {}                             
  {\normalfont}                  
  {}                             
  {\bfseries}                    
  {.}                            
  { }                            
  {}                             
\theoremstyle{uprightremark}
\newtheorem{remark}{Remark}[section]
\theoremstyle{definition}
\newtheorem{definition}{Definition}[section]
\title{Exact Coordinate Descent for High-Dimensional \\
Regularized Huber Regression}
\author[1]{Younghoon Kim$^*$}
\author[2]{Po-Ling Loh}
\author[1]{Sumanta Basu}
\affil[1]{Cornell University}
\affil[2]{University of Cambridge}
\begin{document}

\maketitle

\begin{abstract}
In this study, an exact coordinate descent algorithm is developed for high-dimensional Huber regression regularized with an elastic net penalty. Unlike existing gradient descent or coordinate descent-type methods, this algorithm remains effective even when the Hessian becomes ill-conditioned due to high correlations between covariates drawn from heavy-tailed distributions. For each coordinate, marginal increments arise solely from inlier observations, while the derivatives remain monotonically increasing over a grid constructed from the partial residuals. Building on conventional coordinate descent frameworks, adaptive variable screening rules are proposed to selectively determine which variables to update at each iteration, thereby accelerating convergence. The convergence of the proposed algorithm is formally analyzed, and practical computational strategies are presented to speed up its execution. These enhancements ensure that the algorithm operates rapidly and stably even under challenging scenarios. Extensive simulation studies involving heavy-tailed noise and highly correlated predictors, along with a real-world data application, demonstrate both the practical efficiency of this method and the benefits of the computational enhancements.
\end{abstract}

\begin{keywords}
Coordinate descent, regularization path, robust regression, convex optimization, nonsmooth optimization, screening.
\end{keywords}

\newpage
\section{Introduction}
\label{se:intro}

The theory of robust statistics provides estimates that are resistant to outliers or extreme values \cite[e.g.,][]{hampel2011robust,huber2011robust,maronna2019robust}. For location estimation, M-estimators generalize the mean by downweighting large deviations. In regression, this principle extends to coefficient estimation, producing models that capture the central trend of the response while reducing the influence of atypical observations. Robust regression thus naturally generalizes classical linear regression, maintaining the familiar linear framework while offering greater stability in the presence of outliers.

Regularized Huber regression naturally bridges between classical linear regression and robust regression by incorporating both sparsity and robustness \cite[e.g.,][]{avella2015robust,loh2024theoretical}. Specifically, for $n$ observation pairs $\{y_i, X_i\}_{i=1,\ldots,n}$, where $X_i$ is a $p$-dimensional vector of covariates and $y_i$ is a scalar response, the coefficient vector $\hat{\beta}$ is estimated by minimizing a linear combination of a robust loss function and an elastic net penalty \citep{zou2005regularization}:
\begin{align}\label{e:adaptive_huber}
    \hat{\beta} &\in \argmin_{\beta \in\mathbb{R}^p}\left\{ F(\beta) := f(\beta) + \lambda\mathcal{R}(\beta) := \frac{1}{n}\sum_{i=1}^n w_i\rho_{\delta}(y_i - X_i^{\top}\beta)+ \lambda\mathcal{R}_{\alpha}(\beta) \right\}, \\
    & \mathcal{R}_{\alpha}(\beta) 
    = \alpha \|\beta\|_1 + \frac{(1-\alpha)}{2}\|\beta\|_2^2, \label{e:elastic_net}
\end{align}
where $\lambda>0$ is the regularization parameter, $\alpha\in[0,1]$ controls the mix between Lasso ($\alpha=1$) and ridge ($\alpha=0$) penalties, and $\rho_{\delta}(u)$ is the Huber loss with threshold $\delta>0$, which behaves quadratically for small residuals and linearly 
for large ones:
\begin{equation}\label{e:huber_loss}
    \rho_{\delta}(u) 
    = \left\{ \begin{array}{ll}
        \frac{1}{2}u^2, & \textrm{if } |u| \leq \delta, \\
        \delta|u| - \frac{1}{2}\delta^2, & \textrm{if } |u| > \delta,
    \end{array} \right.
\end{equation}
with derivative
\begin{equation}\label{e:huber_deriv}
    \rho_{\delta}'(u) 
    = \left\{ \begin{array}{ll}
        u, & \textrm{if } |u| \leq \delta, \\
        \delta\textrm{sign}(u), & \textrm{if } |u| > \delta.
    \end{array} \right.
\end{equation}
The weights $\{w_i\}_{i=1,\ldots,n}$ provide further flexibility in estimation. For simplicity, we set $w_i=1$ for all $i=1,\ldots,n$ throughout this study. Note that the precise form of the robust regression objective in \eqref{e:adaptive_huber} may vary slightly across different contexts \cite[see Section 2.2 in][]{loh2021scale}.

For high-dimensional regression models, incorporating sparsity into estimation enhances interpretability and might be expected to lend computational efficiency. However, in the context of regularized robust regression, most existing algorithms rely on full-gradient methods and do not fully exploit sparsity \cite[e.g.,][]{nesterov2013introductory,sun2016majorization,beck2017first}. In particular, such a full-gradient approach is prevalent in robust regression. As a result, previous methods require computing the full gradient at each iteration, which can be computationally expensive and limit scalability in high-dimensional settings.

Coordinate descent algorithms offer outstanding performance for high-dimensional regression with sparsity \cite[e.g.,][]{wright2015coordinate,shi2016primer}. Despite their potential, however, few studies have explored their application in regularized robust regression, leaving opportunities for further research and algorithmic development. Evaluating robust regression methods also requires accounting for extreme data conditions, such as highly-correlated covariates and heavy-tailed distributions. A coordinate descent algorithm specifically designed to perform reliably under these challenging settings has thus been lacking.

\subsection{Related Work}
\label{sse:related}

In this section, we introduce several existing benchmarks for optimizing the regularized Huber regression objective \eqref{e:adaptive_huber}. First, as the most common approach, composite gradient descent \cite[CGD; simply GD;][]{nesterov2013gradient} can be employed; this method has been featured in simulation studies in theoretical work on robust statistics \cite[e.g.,][]{loh2017statistical,loh2021scale}. At the $k^{\text{th}}$ iteration, the algorithm computes
\begin{align}
    \beta^{(k+1)} 
    &\in \argmin_{\beta\in\mathbb{R}^p} \left\{ f(\beta^{(k)}) + \langle \nabla f(\beta^{(k)}), \beta-\beta^{(k)} \rangle + \frac{\eta}{2}\|\beta-\beta^{(k)}\|_2^2 + 
    \lambda\mathcal{R}_{\alpha}(\beta) \right\} \nonumber\\
    &= \argmin_{\beta\in\mathbb{R}^p} \left\{ \frac{1+\lambda(1-\alpha)/\eta}{2}\left\|\beta - 
    \frac{\beta^{(k)} - \frac{1}{\eta}\nabla f(\beta^{(k)})}{1+\lambda(1-\alpha)/\eta}\right\|_2^2 + \frac{\lambda\alpha}{\eta}\|\beta\|_1 \right\} \nonumber\\
    &= S_{\lambda\alpha/(\eta+\lambda(1-\alpha))}\left(
     \frac{\beta^{(k)} - \frac{1}{\eta}\nabla f(\beta^{(k)})}{1+\lambda(1-\alpha)/\eta}\right), 
    \label{e:gradient_descent}
\end{align}
where $\eta>0$ is the stepsize and $S_{\tau}(z)$ is the soft-thresholding operator,
\begin{equation}\label{e:soft_threshold}
    \left(S_{\tau}(z)\right)_j 
    = \mathrm{sign}(z_j)\max(|z_j|-\tau,0), \quad j=1,\ldots,p.
\end{equation}
Conventionally, one takes $\eta = L_H$ with $L_H = \frac{1}{n}\mathrm{eig}_{\max}(X^{\top}X)$, where $X$ is the $n\times p$ design matrix and $\mathrm{eig}_{\max}(M)$ denotes the largest eigenvalue of $M$. This is the same Lipschitz constant appearing in Proposition \ref{prop:convergence}.

Recently, \cite{sun2020adaptive} applied the iterative local adaptive majorize-minimization method \cite[I-LAMM;][]{fan2018lamm}, a computational framework built on the majorization-minimization (MM) algorithm \cite[see][for a comprehensive survey]{sun2016majorization}. At the $k^{\text{th}}$ iteration, one constructs a majorizing function $h_k(\beta|\beta^{(k)})$ for $f(\beta)$ satisfying
\begin{equation}\label{e:majorizing}
    h_k(\beta|\beta^{(k)}) \geq f(\beta), \quad h_k(\beta^{(k)}|\beta^{(k)}) = f(\beta^{(k)}),
\end{equation}
so that the objective decreases at each step:
\begin{displaymath}
    f(\beta^{(k+1)}) 
    \leq h_k(\beta^{(k+1)}|\beta^{(k)}) 
    \leq h_k(\beta^{(k)}|\beta^{(k)}) = f(\beta^{(k)}).
\end{displaymath}
Taking
\begin{displaymath}
    h_k(\beta|\beta^{(k)}) 
    = f(\beta^{(k)}) + \langle \nabla f(\beta^{(k)}), \beta-\beta^{(k)} \rangle + \frac{\eta_k}{2}\|\beta-\beta^{(k)}\|_2^2,
\end{displaymath}
where $\eta_k$ is chosen adaptively to ensure $h_k(\beta^{(k+1)}|\beta^{(k)})\geq f(\beta^{(k+1)})$, the update takes the same form as \eqref{e:gradient_descent} with $\eta$ replaced by $\eta_k$:
\begin{displaymath}
    \beta^{(k+1)} 
    = S_{\lambda\alpha/(\eta_k+\lambda(1-\alpha))}\left(\frac{\beta^{(k)}-\frac{1}{\eta_k}\nabla f(\beta^{(k)}) }{1+\lambda(1-\alpha)/\eta_k}\right).
\end{displaymath}

So far, the two methods discussed above are gradient descent approaches, in which the elastic net penalty is applied only at the final step to shrink small-scale estimates. An alternative is the coordinate descent algorithm recently proposed by \cite{yi2017semismooth}, which leverages a second-order (Newton-type) local approximation \cite[e.g.,][]{mifflin1977semismooth,chen2012smoothing} for nonsmooth functions and is applicable to the Huber loss. Note that for each coordinate $j=1,\ldots,p$, the Karush–Kuhn–Tucker conditions (KKT) for \eqref{e:adaptive_huber} are given by
\begin{align}
     -\frac{1}{n}\sum_{i=1}^n X_{ij}\rho_{\delta}'(y_i - X_i^{\top}\hat{\beta}) + \lambda \alpha \hat{s}_j + \lambda(1-\alpha)\hat{\beta}_j 
    &= 0, \label{e:KKT1}\\
     \hat{\beta}_j - 
    \mathcal{S}_{1}(\hat{\beta}_j+\hat{s}_{j}) 
    &= 0, \label{e:KKT2}
\end{align}
where $\hat{s}_{j} \in \partial|\hat{\beta}_j|$ is the slack variable for the $j^{\textrm{th}}$ coordinate, and $\partial|u|$ denotes the subdifferential of the absolute value function. For simplicity, we omit the iteration index $k$. Define $u_{ij}(\beta_j) = y_i - \sum_{h=1}^p X_{ih}\tilde{\beta}_h + X_{ij}(\tilde{\beta}_j-\beta_j)$, $i=1,\ldots,n$, as the unscaled partial residual for the $j^{\textrm{th}}$ variable, where $\tilde{\beta}_{h}$ (and $\tilde{s}_{h}$) are $\hat{\beta}_j$ (and $\hat{s}_{j}$) while $h \neq j$ coordinates remains the same from the previous $j^{\textrm{th}}$ coordinate update so that $u_{ij}(\beta_j)$ is the residual that would result from replacing $\tilde\beta_j$ with $\beta_j$ while keeping all other coordinates fixed at $\tilde\beta_h$. Then the current update of the $j^{\textrm{th}}$ coordinate can be written as a function of $(\tilde{\beta}_j,\tilde{s}_j)$:
\begin{align*}
    \begin{pmatrix}
        \beta_j \\ s_j
    \end{pmatrix} 
    &\leftarrow \begin{pmatrix}
        \tilde{\beta}_j \\ \tilde{s}_j
    \end{pmatrix} + \begin{pmatrix}
       \frac{1}{n}\sum_{i=1}^n X_{ij}^2\mathbf{1}_{\{|u_{ij}(\tilde{\beta}_j)| \leq \delta\}} & \lambda\alpha \\
        0 & -1
    \end{pmatrix}^{-1} \begin{pmatrix}
        -\frac{1}{n}\sum_{i=1}^n \rho_{\delta}'(u_{ij}(\tilde{\beta}_j)) +\lambda\alpha \tilde{s}_{j} +\lambda(1-\alpha)\tilde{\beta}_j \\
        -\tilde{s}_{j} + \textrm{sign}(\tilde{\beta}_j+\tilde{s}_j)
    \end{pmatrix} \\
    &= \begin{pmatrix}
        \tilde{\beta}_j 
        + \frac{\frac{1}{n}\sum_{i=1}^n\rho_{\delta}'(u_{ij}(\tilde{\beta}_j))X_{ij} - \lambda\alpha\textrm{sign}(\tilde{\beta}_j+\tilde{s}_j) - \lambda(1-\alpha)\tilde{\beta}_j}{\frac{1}{n}\sum_{i=1}^n X_{ij}^2\mathbf{1}_{\{|u_{ij}(\tilde{\beta}_{j})| \leq \delta\}} + \lambda(1-\alpha)} \\
        \textrm{sign}(\tilde{\beta}_j + \tilde{s}_j)
    \end{pmatrix},
\end{align*}
when $|\tilde{\beta}_j+\tilde{s}_j| \geq 1$; the case when $|\tilde{\beta}_j+\tilde{s}_j| \leq 1$ can be derived similarly. 

The Hessian appearing in the algorithm is
\begin{displaymath}
    H_j := \begin{pmatrix}
        a_j & \lambda \alpha \\ 0 & -1
    \end{pmatrix}, \quad a_j:=\frac{1}{n}\sum_{i=1}^n X_{ij}^2\mathbf{1}_{\{|u_{ij}(\tilde{\beta}_j)| \leq \delta\}},
\end{displaymath}
where $a_j$ represents the average curvature of the Huber loss with respect to the $j^{\textrm{th}}$ variable, restricted to observations whose residuals fall within the quadratic region $[-\delta,\delta]$ of the Huber loss. The condition number $\kappa(H_j)$ is given by
\begin{displaymath}
    \kappa(H_j) 
    = \frac{\sqrt{\textrm{eig}_{\max}(H_j^{\top}H_j)}}{\sqrt{\textrm{eig}_{\min}(H_j^{\top}H_j)}}
    = \sqrt{\frac{c_j + \sqrt{c_j^2-4a_j^2}}{c_j - \sqrt{c_j^2-4a_j^2}}},
\end{displaymath}
where $c_j = a_j^2+(\lambda\alpha)^2+1$. It follows that $\kappa(H_j)\to\infty$ as $a_j\to0$. This reveals two potential sources of instability: First, the convergence of the algorithm \cite[Theorem 3.2 in][]{yi2017semismooth} requires at least one residual to lie in the quadratic region, i.e., $|u_{ij}(\tilde{\beta}_j)|\leq\delta$ for some $i\in\{1,\ldots,n\}$, so that $a_j>0$. If it fails, which can happen when $\delta$ is small or residuals are too large, under heavy-tailed noise, $H_j$ becomes nearly singular, and the Newton step is not well-defined. Even when this condition holds marginally, $a_j\approx 0$ makes $\kappa(H_j)$ large and the Newton step numerically unstable. Also, when the columns of $X$ are highly correlated, the algorithm becomes particularly vulnerable, since its Newton step scales the gradient solely by $a_j^{-1}$. Specifically, when $X_{j}\approx X_{j'}$, the term $\frac{1}{n}\sum_{i=1}^nX_{ij}X_{ij'}\mathbf{1}_{\{|u_{ij}(\tilde{\beta}_j)| \leq \delta\}}$ is omitted from the update. This is precisely the information needed to apportion the signal between $\beta_j$ and $\beta_{j'}$ correctly, and its absence causes the two coordinate updates to be inconsistent, leading to oscillation across iterations. Taken together, the practical performance of the Newton-type method can be highly sensitive under challenging design matrix conditions.

\subsection{Contributions}
\label{sse:contribution}

In this paper, we introduce a pathwise exact coordinate descent algorithm (hereafter referred to simply as \emph{coordinate descent}) for estimating parameters in elastic net-regularized Huber regression. The method is inspired by the coordinate descent approach to penalized regression \citep{friedman2007pathwise,mazumder2011sparsenet}, which leverages a first-order optimization scheme. In particular, compared to the benchmark methods that rely on second-order approximations, our proposed method remains effective under ill-conditioned design matrices, such as those with correlated columns or heavy-tailed distributions. 

We establish convergence guarantees and present practical computational strategies to accelerate the algorithm. We also illustrate the performance of our algorithm through numerical experiments that specifically consider these challenging scenarios. Finally, we provide an R package, \texttt{rome}, designed in a similar manner as the well-known \texttt{glmnet} \citep{friedman2010regularization}, making it easy for potential users to become familiar with our package.

\subsection{Organization of the Paper}
\label{sse:organization}

The rest of this paper is organized as follows: Section \ref{se:method} introduces the essential ideas, motivated by the example of calculating a univariate Huber location estimator and extending it to regularized Huber regression. Section \ref{se:compuation} presents computational techniques that accelerate convergence of the algorithm, and establishes theoretical guarantees for both the algorithm and the suggested techniques. Section \ref{se:numerical} reports numerical experiments, including comparisons with benchmark methods, demonstrations of the effectiveness of the computational accelerations, and a data application. Section \ref{se:discussion} concludes with a discussion.


\section{Proposed Method}
\label{se:method}

In this section, we present the proposed algorithm. We first introduce the idea through a simplified problem, the regularized Huberized median, and then extend it to regularized Huber regression. Finally, we establish the convergence rate of the algorithm.

\subsection{Huberized Median for Univariate Samples}
\label{sse:huberized}

We explain the proposed algorithm by illustrating the univariate Huberized median problem, possibly with a penalty term. Suppose we have one-dimensional observations $\{x_1,\ldots,x_n\}$, and $\lambda \geq 0$ and $\alpha\in[0,1]$ are given. The regularized Huberized median is obtained by solving
\begin{equation}\label{e:huber_median}
    \hat{b} 
    \in \argmin_{b\in\mathbb{R}} \left\{ F(b) = \sum_{i=1}^n \rho_{\delta}(x_i - b) + \lambda\left(\alpha|b| + \frac{(1-\alpha)}{2}b^2\right) \right\}.
\end{equation}
Recall the Huber loss \eqref{e:huber_loss} with threshold $\delta>0$, which can be written as
\begin{displaymath}
    \rho_{\delta}(u) = \frac{u^2}{2}\mathbf{1}_{\{|u|\leq\delta\}} + \left(\delta|u| - \frac{\delta^2}{2}\right)\mathbf{1}_{\{|u|>\delta\}}.
\end{displaymath}
Its derivative \eqref{e:huber_deriv} is then
\begin{displaymath}
    \rho_{\delta}'(u) = u\mathbf{1}_{\{|u|<\delta\}} + \delta\sign(u)\mathbf{1}_{\{|u|>\delta\}}.
\end{displaymath}
Using this expression, the derivative of $f(c)$ in \eqref{e:huber_median} with respect to $b$ is
\begin{equation}\label{e:derive_huber_median}
    F'(b) 
    = \sum_{i;|x_i-b|\leq \delta} (b-x_i) + \delta \sum_{i;|x_i-b| > \delta} \sign(b-x_i) 
    + \lambda \left(\alpha\sign(b) + (1-\alpha)b\right).
\end{equation}
We wish to find $b$ such that $F'(b)=0$ in \eqref{e:derive_huber_median}. Among the two summations in \eqref{e:derive_huber_median}, the $i^{\text{th}}$ observation belongs to exactly one, determined by the location of $b$. Specifically, the marginal contribution of the $i^{\text{th}}$ observation to the derivative is
\begin{equation}\label{e:derivative_univ}
    S_i = \left\{\begin{array}{cl}
      -\delta,   &  \textrm{if} \ b < x_i-\delta, \\
      1\times(b-(x_i-\delta)) - \delta,   & \textrm{if} \ x_i-\delta \leq b \leq x_i + \delta,  \\
     \delta, & \textrm{if} \ b > x_i + \delta.
    \end{array}\right.
\end{equation}
Hence, the marginal increment $S_i$ remains $-\delta$ when $b < X_i-\delta$, increases linearly with slope 1 over the interval $(X_i-\delta, X_i+\delta)$, and reaches $\delta$ once $b > X_i+\delta$.

The structure of the marginal contribution in \eqref{e:derivative_univ} implies that as $b$ varies, the incremental change can be determined simply by counting how many observations have $b$ within their intervals $(x_i-\delta,x_i+\delta)$. The increment is then proportional to the number of such observations relative to the total. A straightforward way to locate the potential value of $b$ is to use grid search: construct a fine grid from $\min_i(x_i-\delta)$ to $\max_i(x_i+\delta)$, and trace the trajectory of $F'(b)$ to identify where it crosses zero. Obviously, this trajectory is monotonically increasing.

Suppose we construct a grid of $2n$ points from the observations and thresholds, $\{x_1-\delta,x_1+\delta,\ldots,x_n-\delta,x_n+\delta\}$. Sorting these values yields the ordered sequence $v_1\leq v_2\leq\cdots\leq v_{2n}$, called \emph{kinks}. For a given $b$, define the active set $\mathcal{I}(b)=\{i:|b-x_i|\leq\delta\}$ to be the collection of indices whose corresponding intervals $(x_i-\delta,x_i+\delta)$ contain $b$. From \eqref{e:derivative_univ}, each observation $i$ contributes slope 1 to $F'(b)$ when $i \in \mathcal{I}(b)$, and slope 0 otherwise. Since $x_i-\delta< x_i + \delta$ always holds, sorting the $n$ left endpoints $\{x_i-\delta\}_{i=1}^n$ separately retains their original indices $i_1,\ldots,i_n$. As $b$ increases through the sorted kinks, observation $i_s$ enters $\mathcal{I}(b)$ when $b$ crosses the $s^{\text{th}}$ left endpoint, incrementing the cumulative slope $A$ by 1. Between any two consecutive kinks $v_m$ and $v_{m+1}$, $\mathcal{I}(b)$ remains constant and $F'(b)$ is affine in $b$ with slope
\begin{displaymath}
    A_m = |\mathcal{I}(v_m)|+\lambda(1-\alpha),
\end{displaymath}
where the $\lambda(1-\alpha)$ term comes from the squared penalty. The value of $F'(b)$ is propagated incrementally from one kink to the next as $F'(v_{m+1})=F'(v_m)+A_m(v_{m+1}-v_m)$. Since $F'(b)$ is monotonically increasing, there exists a unique index $m$ such that $F'(v_m)<0\leq F'(v_{m+1})$. In the interval $(v_m,v_{m+1})$, $F'(b)$ is affine with slope $A_m>0$, and the root is obtained analytically as
\begin{displaymath}
    \hat{b} = v_m-\frac{F'(v_m)}{A_m}.
\end{displaymath}
Note that the $|b|$ component in the penalty term introduces an additional $\lambda\alpha$ adjustment around $b=0$, encouraging the optimal solution $\hat{b}$ to shrink toward zero via soft-thresholding, while the $b^2$ component provides an additional $(1-\alpha)$ slope with respect to $b$, which makes the $F'(b)$ trajectory smoother by increasing the marginal increments. Together, these two effects are analogous to the elastic net operation.

\begin{figure}[t]
     \begin{center}
     \includegraphics[width=0.48\textwidth,height=0.25\textheight]
     {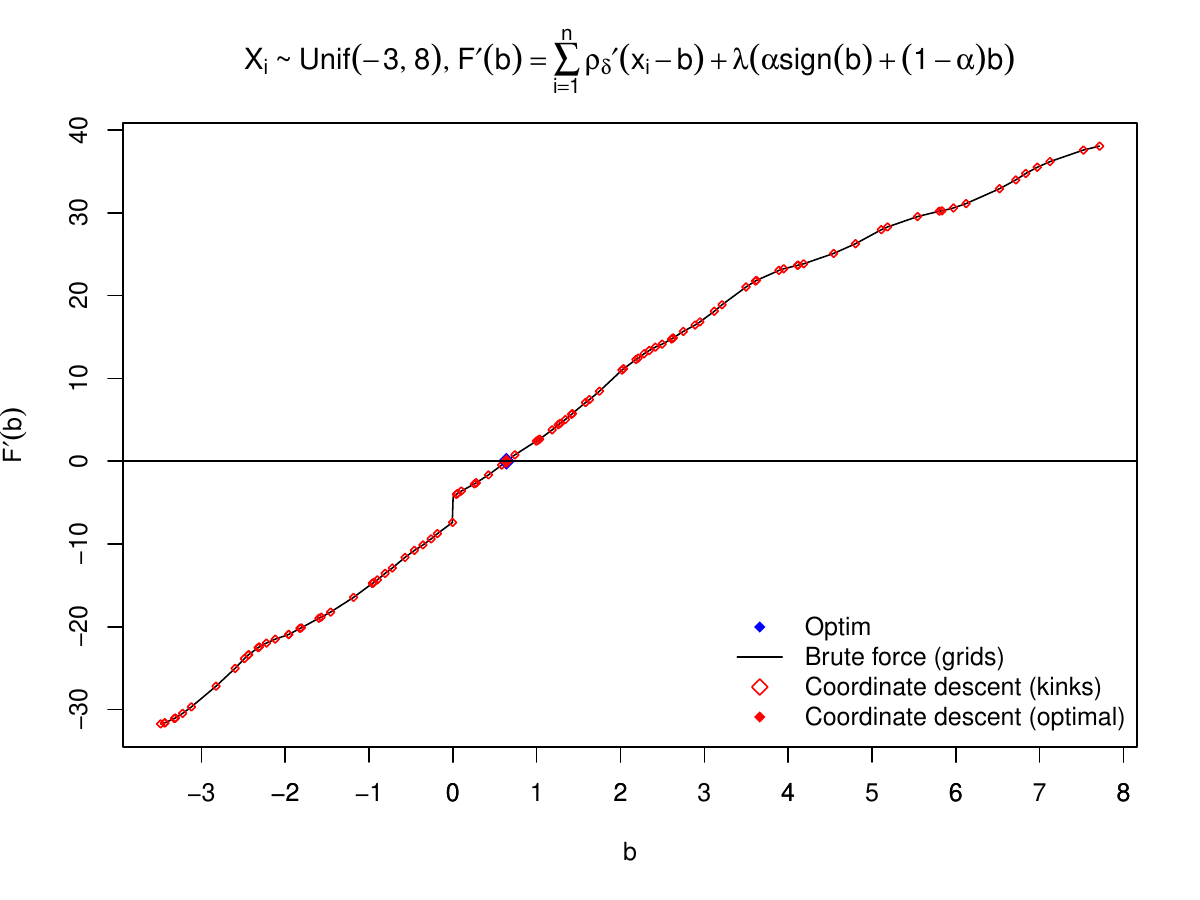}
    \includegraphics[width=0.48\textwidth,height=0.25\textheight]
     {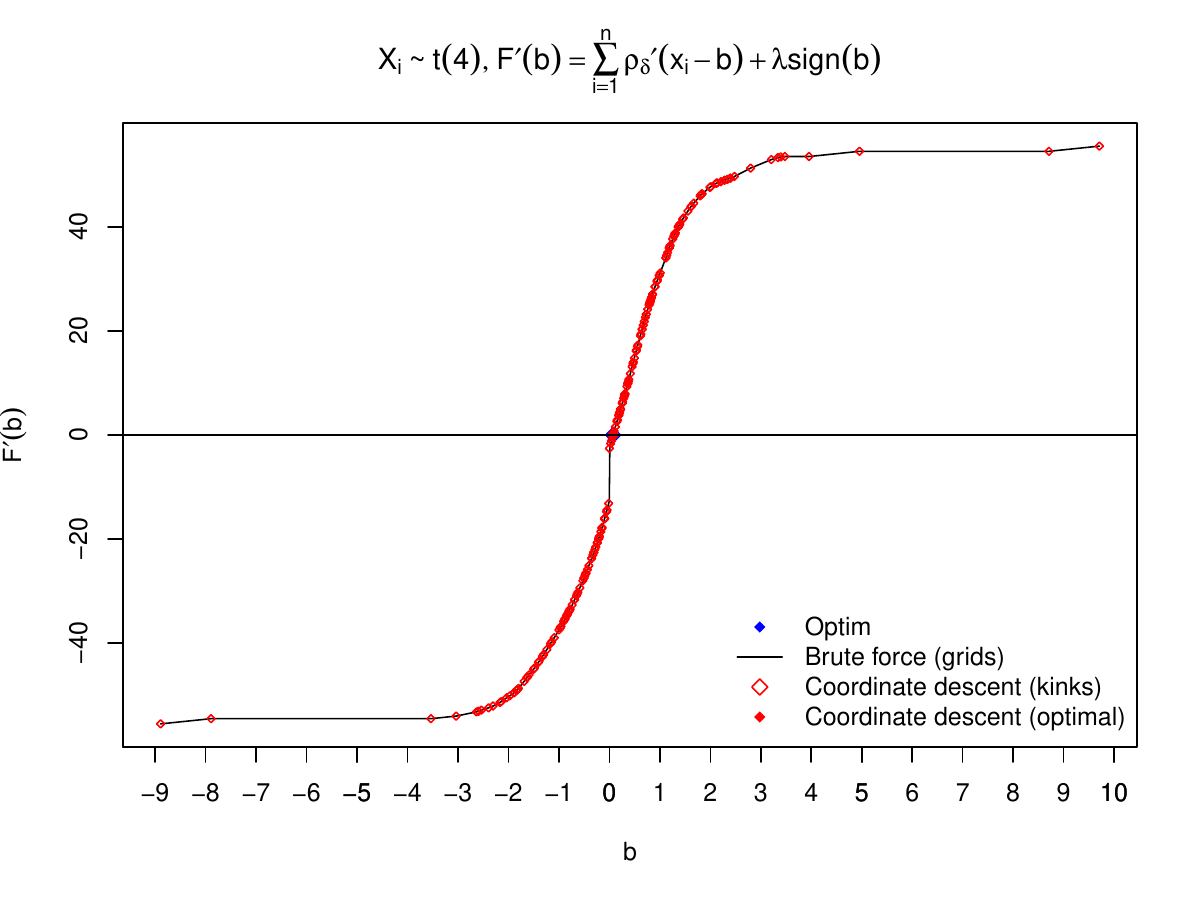}
     \end{center}
    \caption{Illustration of the Huberized medians for Data I (left) and Data II (right) under different values of $\lambda$ and $\alpha$. The plots display the solutions obtained by the coordinate descent algorithm (red dot), the ordered kinks used for the algorithm (red diamonds), the results of a grid search evaluating  $F'(b)$ over a range of $b$ values (black line), and the solution from a univariate optimization solver applied to $F(b)$ (blue dot).}
    \label{fig:hubermized_median}
\end{figure}

\subsubsection{Two Examples}
\label{ssse:examples}

To visualize the algorithm, we present two examples that illustrate different cases. In the first example, suppose we draw 50 random samples from a uniform distribution on $[-3,8]$ (Data I). In this case, the Huberized median is unlikely to lie at zero. The left panel of Figure \ref{fig:hubermized_median} shows the grid search results (black lines) with the kinks marked in red, all of which lie exactly on the constructed grid. Because the Huberized median is located away from zero, the zero-crossing of $F'(c)$ also occurs away from zero. Consequently, introducing an elastic net penalty term with $\lambda=3$ and $\alpha=0.5$ does not shrink the optimal solution toward zero.

In the second example, we consider 100 random samples from a $t$-distribution with 4 degrees of freedom (Data II). Aside from a few outliers, most observations are symmetrically distributed; thus, the Huberized median is likely to fall near zero. The right panel of Figure \ref{fig:hubermized_median} displays the results using the same representation as the first example, with all kinks again lying exactly on the constructed grid. Compared to Data I, these samples exhibit larger values. Even in this case, however, adding the Lasso penalty term with $\lambda=5$ ($\alpha=0$) shrinks the solution exactly to zero.

\subsection{Extension to Coordinate Descent for Regularized Huber Regression}
\label{sse:regression}

We extend the idea of the Huberized median introduced in Section \ref{sse:huberized} to derive the estimator for regularized Huber regression via a coordinate descent algorithm.

While the \texttt{rome} package can also accommodate adaptive Huber regression \cite[e.g.,][]{loh2021scale}, the approach throughout this study focuses on the unweighted case. Accordingly, we assume $w_i=1$ for all $i=1,\ldots,n$. 

In this section, we assume that the tuning parameters $\lambda$ and $\alpha$ are fixed. We use the superscript $(k,j)$ to represent the $k^{\textrm{th}}$ iteration during which the $j^{\textrm{th}}$ coordinate update is complete.

Recall the regularized Huber regression objective in \eqref{e:adaptive_huber}:
\begin{equation}\label{e:huber_regression}
    \hat{\beta} \in \argmin_{\beta \in\mathbb{R}^p} \left\{ F(\beta) := \frac{1}{n}\sum_{i=1}^n \rho_{\delta}(y_i-X_{i}^{\top}\beta) + \lambda\left(\alpha\|\beta\|_1 + \frac{(1-\alpha)}{2}\|\beta\|_2^2\right) \right\}.
\end{equation}
At the $k^{\textrm{th}}$ iteration, for the $j=1,\ldots,p$ variable, the coordinate descent algorithm solves
\begin{equation}\label{e:huber_regression_partial}
    \hat{\beta}_j^{(k,j)} \in \argmin_{\beta_j\in\mathbb{R}}  \left\{ F(\beta_j)
    =: f(\beta_j) + \lambda\mathcal{R}_{\alpha}(\beta_j) = \frac{1}{n}\sum_{i=1}^n \rho_{\delta}\left(X_{ij}(r_{ij}^{(k,j-1)} - \beta_j)\right) + \lambda \left(\alpha|\beta_j| +\frac{(1-\alpha)}{2}\beta_j^2 \right) \right\},
\end{equation}
where $r_{ij}^{(k,j-1)} = r_{ij}(\beta_j^{(k,j-1)}) =\frac{y_i - \sum_{h\neq j}X_{ih}\hat{\beta}_h^{(k,j-1)}}{X_{ij}}$, $i=1,\ldots,n$, is the partial residual of the $j^{\text{th}}$ variable from the previous updates up to the $(j-1)^{\textrm{th}}$ coordinate. Similar to \eqref{e:derive_huber_median}, the derivative of the penalized regression with respect to the $j^{\text{th}}$ variable is 
\begin{align*}
    F'(\beta_j^{(k,j)}) 
    &= \frac{1}{n}\sum_{i;|r_{ij}^{(k,j-1)}-\beta_j|\leq \delta/|X_{ij}|} X_{ij}^2(\beta_j-r_{ij}^{(k,j-1)}) \\
    & + \frac{\delta}{n}\sum_{i;|r_{ij}^{(k,j-1)}-\beta_j| > \delta/|X_{ij}|}|X_{ij}|\sign(\beta_j-r_{ij}^{(k,j-1)}) + \lambda \left(\alpha\sign(\beta_j) + (1-\alpha)\beta_j\right).
\end{align*}
Now, for each iteration, the term $r_{ij}^{(k,j)}$ at the update of the $j^{\text{th}}$ variable plays the role of observations at the univariate Huberized median in Section \ref{sse:huberized}. The marginal contribution of the $i^{\text{th}}$ observation to $f'(\beta_j)$ is
\begin{displaymath}
    S_i^{(k,j)} = \left\{\begin{array}{cl}
      -\frac{|X_{ij}|}{n},   &  \textrm{if} \ \beta_j < r_{ij}^{(k,j-1)} - \delta/|X_{ij}|, \\
      \frac{X_{ij}^2}{n}\times(\beta_j-(r_{ij}^{(k,j-1)}) - \delta/|X_{ij}|)) - \frac{|X_{ij}|}{n},   & \textrm{if} \ r_{ij}^{(k,j-1)} - \delta/|X_{ij}| \leq \beta_j \leq r_{ij}^{(k,j-1)} + \delta/|X_{ij}|,  \\
     \frac{|X_{ij}|}{n}, & \textrm{if} \ \beta_j > r_{ij}^{(k,j-1)} + \delta/|X_{ij}|.
    \end{array}\right.
\end{displaymath}
Analogous to the case of the Huberized median, we obtain $2n$ kinks by sorting the set $$\left\{r_{ij}^{(k,j-1)}\pm \frac{\delta}{|X_{ij}|}\right\}_{i=1,\ldots,n}$$ in increasing order. For a given $\beta_j$, define the active set $$\mathcal{I}^{(k,j-1)}(\beta_j)=\left\{i : |\beta_j-r_{ij}^{(k,j-1)}|\leq \frac{\delta}{|X_{ij}|}\right\}.$$ From the expression of $S_i^{(k,j)}$, each observation $i$ contributes slope $X_{ij}^2/n$ to $F'(\beta_j)$ when $i\in\mathcal{I}^{(k,j-1)}(\beta_j)$, and slope 0 otherwise. Since $r_{ij}^{(k,j-1)}-\delta/|X_{ij}|<r_{ij}^{(k,j-1)}+\delta/|X_{ij}|$ always holds, storing the $n$ left endpoints $\{r_{ij}^{(k,j-1)}-\delta/|X_{ij}|\}_{i=1}^n$ separately retains their original indices $i_1,\ldots,i_n$. As $\beta_j$ increases through the sorted kinks, observation $i_s$ enters $\mathcal{I}^{(k,j-1)}(\beta_j)$ when $\beta_j$ crosses the $s^{\text{th}}$ left endpoint, incrementing the cumulative slope $A^{(k,j)}$ by $X_{i_s j}^2/n$. Since $X_{i_s j}^2/n\geq0$, the slope $A$ is nondecreasing as the kinks are traversed. Between any two consecutive kinks $v_m^{(k,j)}$ and $v_{m+1}^{(k,j)}$, the active set $\mathcal{I}^{(k,j-1)}(\beta_j)$ remains constant and $F'(\beta_j;\mathcal{I}^{(k,j-1)})$ is affine in $\beta_j$, with cumulative slope
\begin{displaymath}
    A^{(k,j)} = \frac{1}{n}\sum_{i \in \mathcal{I}^{(k,j-1)}(v_m^{(k,j)})} 
    X_{ij}^2 + \lambda(1-\alpha).
\end{displaymath}
The value of $F'$ is propagated incrementally from one kink to the next as 
$$F'(v_{m+1}^{(k,j)};\mathcal{I}^{(k,j-1)})=F'(v_m^{(k,j)};\mathcal{I}^{(k,j-1)})+A^{(k,j)}(v_{m+1}^{(k,j)} - v_m^{(k,j)}).$$ Since $F'(\beta_j;\mathcal{I}^{(k,j-1)})$ is monotonically increasing, there exists a unique index $m$ such that $$F'(v_m^{(k,j)};\mathcal{I}^{(k,j-1)})<0\leq F'(v_{m+1}^{(k,j)};\mathcal{I}^{(k,j-1)}).$$ In the interval $(v_m^{(k,j)}, v_{m+1}^{(k,j)})$, $F'(\beta_j;\mathcal{I}^{(k,j-1)})$ is affine with slope $A^{(k,j)}>0$, and the minimizer is obtained analytically as
\begin{displaymath}
    \hat{\beta}_j^{(k,j)} = v_m^{(k,j)} - 
    \frac{F'(v_m^{(k,j)};\,\mathcal{I}^{(k,j-1)})}{A^{(k,j)}}.
\end{displaymath}

The remaining steps of the algorithm follow those of \texttt{glmnet} \citep{friedman2010regularization}, in the sense that we employ pathwise updates along a decreasing sequence of regularization parameters, use warm starts to accelerate convergence, and apply a screening rule to restrict updates to variables that are likely to be nonzero \cite[see also Chapter 5.4 in][]{hastie2015statistical}.

\begin{remark} 
To the best of our knowledge, there is no practical rule to obtain the optimal weights $\{w_i\}_{i=1,\ldots,n}$ and threshold $\delta$. In terms of the weights, we adopt Mallows-type reweighting to improve robustness against high-leverage points by modifying the Huber loss with covariate-dependent weights. Specifically, in the adaptive regularized Huber regression \eqref{e:adaptive_huber}, we set $w_i=w(X_i)=\min\{1,b/\|X_i\|_2\}$ for some constant $b$. The weight $w(X_i)$ adjusts according to the leverage of the $i^{\text{th}}$ observation, thereby reducing the influence of covariates with unusually large norms \cite[see also Section 3.2 in][]{loh2017statistical}. From an algorithmic perspective, this is equivalent to replacing $(y_i,X_{ij})$ with $(w_iy_i,w_iX_{ij})$, so the computational procedure remains unchanged.

The choice of the threshold parameter $\delta$ is more complicated due to its relationship to the scale of the additive errors, usually assumed to be unknown a priori \cite[e.g., Chapter 7.7 in][]{huber2011robust}. Combining our approach with methods such as optimizing Huber's concomitant estimator to obtain the scale parameter is beyond the scope of our current analysis, and is left for future research \cite[see Section 3 in][for a related discussion]{loh2021scale}.
\end{remark}

\subsection{Theory of Convergence}
\label{sse:conv}

This section presents a convergence guarantee for the algorithm proposed in Section \ref{sse:regression}.

\begin{proposition}[Convergence of the algorithm]\label{prop:convergence}
Let $\beta^{(k)} := \beta^{(k,p)}$ denote the solution after the $k^{\text{th}}$ full sweep over all $p$ coordinates. For any minimizer $\beta^*$ of $F(\beta)$ in \eqref{e:adaptive_huber} and for all $k\geq1$, we have
\begin{equation}\label{e:convergence_guarantee}
    F(\beta^{(k)}) - F(\beta^*) \leq \frac{L_H\|\beta^{(0)} - \beta^*\|_2^2}{2k},
\end{equation}
where the Lipschitz constant $L_H$ of the gradient of the Huber loss is bounded by $L_H \leq \frac{1}{n}\mathrm{eig}_{\max}(X^{\top}X)$. Hence, the optimality gap $F(\beta^{(k)}) - F(\beta^*)$ converges to $0$ as $k \rightarrow \infty$.
\end{proposition}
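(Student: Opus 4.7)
The plan is to adapt the convergence analysis of cyclic proximal coordinate descent for $\ell_1$-regularized smooth convex problems---the template used for the Lasso---by first establishing Lipschitz smoothness of the Huber loss, then exploiting the fact that exact coordinate minimization yields at least as much descent as a proximal-gradient step along the active coordinate, and finally telescoping a standard three-point inequality.

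First I will verify that $f(\beta)$ is $L_H$-smooth with $L_H \leq \frac{1}{n}\textrm{eig}_{\max}(X^{\mathsf{T}}X)$. The weak derivative of $\rho_{\delta}'$ equals the indicator $1_{\{|u|\leq\delta\}}\in[0,1]$, so $\rho_{\delta}'$ is $1$-Lipschitz. Writing $\nabla f(\beta) = -\frac{1}{n}X^{\mathsf{T}}\rho_{\delta}'(y - X\beta)$ componentwise yields $\|\nabla f(\beta) - \nabla f(\beta')\|_2 \leq \frac{1}{n}\|X^{\mathsf{T}}X\|_2\|\beta-\beta'\|_2$, which together with the descent lemma $f(\beta') \leq f(\beta) + \langle \nabla f(\beta), \beta'-\beta\rangle + \frac{L_H}{2}\|\beta'-\beta\|_2^2$ provides the smooth-analysis backbone.

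Next, since each coordinate update exactly minimizes $F$ along its coordinate (see Section \ref{sse:regression}), it dominates the proximal-gradient step of step size $1/L_H$ applied in that direction. Summing the coordinate-wise descent lemma across a full sweep and invoking the subgradient optimality condition for $\beta^{(k+1)}$, I will derive the standard three-point inequality
\begin{equation*}
    F(\beta^{(k+1)}) - F(\beta^*) \leq \frac{L_H}{2}\left(\|\beta^{(k)} - \beta^*\|_2^2 - \|\beta^{(k+1)} - \beta^*\|_2^2\right).
\end{equation*}
Summing from $t=0$ to $k-1$, using monotonicity $F(\beta^{(t+1)}) \leq F(\beta^{(t)})$ implied by exact coordinate minimization, and dividing by $k$ yields $F(\beta^{(k)}) - F(\beta^*) \leq L_H\|\beta^{(0)} - \beta^*\|_2^2 / k$.

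The main obstacle will be the per-epoch descent inequality: under cyclic updates the gradient at coordinate $j$ is evaluated at a partially-updated iterate rather than at $\beta^{(k)}$, introducing cross-terms between coordinates. Controlling these either requires a Beck--Tetruashvili-style argument leveraging the coordinate-wise Lipschitz constants $L_j = \frac{1}{n}\|X_{\cdot j}\|_2^2$ together with $\sum_j L_j$ in place of $L_H$, or alternatively invoking the dominance of exact coordinate minimization over proximal-gradient to absorb the cross-terms into a single constant. The clean $L_H/k$ rate stated in the proposition suggests the second route, which can be made rigorous through the explicit kink-based characterization of the exact coordinate update derived in Section \ref{sse:regression}.
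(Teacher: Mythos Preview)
Your approach has a genuine gap at exactly the point you flag as the ``main obstacle.'' The three-point inequality
\[
F(\beta^{(k+1)}) - F(\beta^*) \leq \tfrac{L_H}{2}\bigl(\|\beta^{(k)} - \beta^*\|_2^2 - \|\beta^{(k+1)} - \beta^*\|_2^2\bigr)
\]
is the ISTA inequality for a \emph{full} proximal-gradient step with the global Lipschitz constant $L_H$; it is not known to hold for a cyclic sweep of coordinate updates, even exact ones. Beck--Tetruashvili-type arguments produce rates with an extra factor depending on $p$ (or on $\sum_j L_j / \min_j L_j$), not the clean $L_H/k$ stated here. Your ``second route''---that exact coordinate minimization dominates the proximal step in that coordinate---gives per-coordinate descent but does not by itself absorb the cross-terms arising from the gradient being evaluated at partially updated iterates. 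So as written the argument does not close.

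The paper obtains the stated rate by a completely different mechanism, following \cite{saha2013nonasymptotic}. It does \emph{not} try to prove a descent inequality for coordinate minimization (CM) directly. Instead it assumes the isotonicity condition that $\alpha \mapsto \alpha - \nabla f(\alpha)/L$ is coordinatewise monotone, introduces the notion of super/subsolutions, and shows that if all three methods (GD, coordinate gradient descent, CM) start from the same super- or subsolution, then their iterates satisfy the coordinatewise ordering $z^{(k)} \leq v^{(k)} \leq u^{(k)}$ (or the reverse), which via a monotonicity lemma yields $F(z^{(k)}) \leq F(v^{(k)}) \leq F(u^{(k)})$. The $L_H/k$ bound then comes for free from the standard GD rate applied to $u^{(k)}$. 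The only Huber-specific work is bounding the secant slope $\tau_j$ in \eqref{e:secant} by $L_H \leq \frac{1}{n}\textrm{eig}_{\max}(X^{\mathsf{T}}X)$, which is what lets CM updates be compared to CGD updates inside this framework. Your Lipschitz calculation is fine and matches this last piece, but the global structure of the argument---isotone operator, super/subsolution invariance, and comparison to GD---is absent from your plan and is what actually delivers the constant.
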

The proof is deferred to Appendix \ref{ap:proofs_convergence}. Note that the convergence rate $O(1/k)$ commonly appears in gradient descent-type methods \cite[e.g.,][]{nesterov2013introductory,beck2017first}. However, it can be shown that the convergence rate of the coordinate descent algorithm is \emph{faster} than that of the gradient descent algorithms explained in Section \ref{sse:related}.

Our proof follows the approach of \cite{saha2013nonasymptotic}, by comparing the convergence rates of GD, the proposed coordinate descent (CD), and a coordinate gradient descent (CGD; e.g., \cite{tseng2009coordinate}) framework, as detailed in Appendix \ref{ap:proofs_convergence}. In Section \ref{sse:experiment_convergence}, we provide numerical comparisons of the convergence rates among these methods.

\section{Computational Advancements}
\label{se:compuation}

In this section, we present several computational strategies designed to improve the rate of convergence of our algorithm. We also establish theoretical guarantees to ensure the convergence of the proposed method. Throughout this section, we omit the pair of indices $(k,j)$.

\subsection{Checking Optimality Conditions}
\label{sse:kkt}

In the algorithm described in Section \ref{se:method}, Quicksort is used for reordering kinks. Quicksort is a widely used sorting algorithm that follows a divide-and-conquer strategy: it selects a pivot, partitions the array into elements less than and greater than the pivot, and recursively sorts the subarrays. On average, Quicksort  reorders $n$ elements in $O(n\log n)$ arithmetic operations, which results from performing $O(n)$ work per partition step across $\log n$ levels of recursion. In the worst case, when the pivot choices lead to highly unbalanced partitions, the complexity can degrade to $O(n^2)$ \cite[e.g., see Chapter 7 in][]{cormen2022introduction}. However, this scenario is unlikely in our context, since the order of half of the $2n$ kink points is predetermined.

In the context of the regularized Huber regression algorithm, the arithmetic cost per coordinate update is dominated by sorting the $2n$ kinks generated from the partial residuals $\{r_{ij}\}_{i=1,\ldots,n}$ for each $j^{\text{th}}$ variable. The remaining computations, such as computing cumulative slopes, locating the root where the derivative crosses zero, and updating the global residual vector, require only $O(n)$ operations. Consequently, each coordinate update has complexity $O(n\log n)$, and a full sweep across all variables requires $O(pn\log n)$ arithmetic operations. This is unusual for coordinate descent algorithms, as the computational cost typically scales with the number of variables $p$, not with the number of samples $n$ \cite[e.g.,][]{richtarik2014iteration}. Therefore, it is necessary to develop strategies that reduce the number of variables updated in each iteration.

We use an idea inspired by \cite{kim2025pathwise} (see Proposition 3.1 therein). In quantile regression, the condition relies only on the signs of $X_{ij}$ in the design matrix: once a sign changes, it remains constant thereafter. By contrast, in Huber regression, one must verify whether each observation’s marginal contribution is increasing; i.e., whether the candidate solution lies within the interval defined by the partial residuals and the threshold. Consequently, the condition to be checked is more complex, but the following results show that a similar argument holds in a more general sense.

\begin{proposition}[Karush-Kuhn-Tucker (KKT) condition]\label{prop:KKT}
For the $j^{\text{th}}$ variable, the following conditions characterize the update:
\begin{enumerate}
    \item[(a)] For \eqref{e:huber_regression_partial}, define $\mathcal{I}:=\mathcal{I}(\beta_j)$ as the set of indices $i=1,\ldots,n$ that satisfy $|\beta_j-r_{ij}|\leq \delta/|X_{ij}|$. Then the derivative of $F(\beta)$ with respect to $\beta_j$ can be written as
    \begin{equation}\label{e:slope}
        F'(\beta_j;\mathcal{I}) 
        =
        S_0+\frac{1}{n}\sum_{i\in\mathcal{I}(\beta_j)}\left( 
        X_{ij}^2(\beta_j-r_{ij})+\delta|X_{ij}|\right) 
        +\lambda\left(\alpha\sign(\beta_j) + (1-\alpha)\beta_j\right),
    \end{equation}
    where $S_0=-\frac{\delta}{n}\sum_{i=1}^n |X_{ij}|$.
    \item[(b)] Assume that the kinks $\{v_m\}_{m=1,\ldots,2n}$ are distinct. Let $i_1,\ldots,i_n$ be the observation indices ordered by their left endpoints $r_{i_s j}-\delta/|X_{i_s j}|$, $s=1,\ldots,n$. The nontrivial minimizer $\hat{\beta}_j \in (v_m, v_{m+1})$ exists and is unique if and only if
    \begin{displaymath}
        F'(v_m;\mathcal{I})F'(v_{m+1};\mathcal{I}')<0,
    \end{displaymath}
    for consecutive kinks $v_m<v_{m+1}$, where $\mathcal{I}\subset\mathcal{I}'$ and $|\mathcal{I}'\setminus\mathcal{I}|=1$, with $\mathcal{I}'=\mathcal{I}\cup\{i_s\}$ for the observation $i_s$ whose left endpoint $r_{i_s j}-\delta/|X_{i_s j}|$ equals $v_m$. In this case,
    \begin{displaymath}
        \hat{\beta}_j = v_m - \frac{F'(v_m;\mathcal{I})}{A_m},
    \end{displaymath}
    where $A_m = \frac{1}{n}\sum_{i\in\mathcal{I}'}X_{ij}^2 + \lambda(1-\alpha)$ is the cumulative slope on $(v_m,v_{m+1})$.
    \item[(c)] The current value $\hat{\beta}_j$ is already the minimizer of \eqref{e:huber_regression_partial} if and only if
    \begin{equation}\label{e:KKT_check}
        |F'(\hat{\beta}_j;\mathcal{I}(\hat{\beta}_j))| \leq \lambda\alpha,
    \end{equation}
    where $F'(\hat{\beta}_j;\mathcal{I}(\hat{\beta}_j))$ is given by \eqref{e:slope}. If $\hat{\beta}_j \neq 0$ and the condition \eqref{e:KKT_check} holds, no update is performed. Otherwise, $\hat{\beta}_j =0$ if the condition holds.
\end{enumerate}
\end{proposition} 

The proof is deferred to Appendix \ref{ap:proofs_kkt}. This step is performed prior to each coordinate update. The partial residuals computed for the KKT optimality condition check \eqref{e:KKT_check} are then reused in the update, if necessary. To demonstrate the efficiency of checking the optimality condition \eqref{e:KKT_check} before updating variables, we compare computation times with and without this check in Section \ref{sse:experiment_kkt}.

\subsection{Warm Starts}
\label{sse:warm}

In practice, it is common to compute a sequence of regularized solutions over a decreasing grid of regularization parameters $\{\lambda_{\ell}\}_{\ell=0,1,\ldots,\ell_{\max}}$. In \texttt{rome}, we set $\ell_{\max}=100$ by default. The grid begins at the maximum regularization parameter value $\lambda_0$, which is derived from the subdifferential of the penalty at zero and given by
\begin{equation}\label{e:lambda_max}
    \lambda_0 = \frac{1}{n\alpha} \max_{j=1,\ldots,p} \left| \sum_{i=1}^n X_{ij}\rho_{\delta}'(y_i) \right|,
\end{equation}
where $\rho_{\delta}'(u)$ is the derivative of the Huber loss \eqref{e:huber_deriv}. This choice is consistent with the standard elastic net convention of selecting the smallest value for which all penalized coefficients are shrunk exactly to zero. Following the approach of the \texttt{glmnet} package, the grid is constructed by decreasing $\lambda$ from $\lambda_0$ down to $\lambda_{\ell_{\max}}=\epsilon \lambda_0$ on a logarithmic scale, with the default in \texttt{rome} set to $\epsilon=0.001$. Each solution $\hat{\beta}(\lambda_{\ell})$ then serves as a warm start for computing $\hat{\beta}(\lambda_{\ell+1})$. Consequently, the number of active, nonzero coefficients tends to increase gradually as $\ell$ increases.

\subsection{Adaptive Sequential Strong Rules}
\label{sse:screen}

Along with the warm start described in Section \ref{sse:warm}, another widely used computational strategy in coordinate descent algorithms is the sequential strong (screening) rule, which identifies a subset of variables likely to be active. The most prominent example is the sequential strong rule \citep{tibshirani2012strong}. However, as noted by \cite{yi2017semismooth}, the conventional setting used in Lasso programs is often unsuitable, as violations are frequently observed. To address this, they proposed a variant known as the adaptive sequential strong rule. 

Define $c_j(\lambda) = -\frac{1}{n}\sum_{i=1}^n X_{ij}\rho_{\delta}'(y_i - X_i^{\top}\hat{\beta}(\lambda))$. Then one has
\begin{displaymath}
    |c_j(\lambda)| 
    \leq \lambda\alpha - \frac{\|X_i\|_2\|y\|_2}{\lambda_0}(\lambda_0-\lambda) 
    = \lambda\alpha - \frac{\|X_i\|_2\|y\|_2}{\frac{1}{n\alpha}\max_i|X_{i}^{\top}\rho_{\delta}(y_i)|}(\lambda_0-\lambda) := \lambda\alpha + M\alpha(\lambda - \lambda_0).
\end{displaymath}
In order for each $c_j$ to be Lipschitz continuous, we need
\begin{displaymath}
    |c_{j}(\lambda) - c_{j}(\lambda')| \leq M\alpha(\lambda - \lambda').
\end{displaymath}
The sequential strong rule of \cite{tibshirani2012strong} is obtained by replacing $\lambda$ with $\lambda_{\ell}$, and $\lambda_{\max}$ and $\lambda'$ with $\lambda_{\ell-1}$, respectively, and setting $M(\lambda_{\ell})=M_{\ell}=1$ for all $\ell$. The adaptive sequential strong rule of \cite{yi2017semismooth} allows $M_{\ell}$ to vary with $\lambda_{\ell}$. In our algorithm, we adopt the latter version without further modification. However, note that the KKT condition in Step 2 below is described in Section \ref{sse:kkt}, and it differs from the one given therein. 

In summary, take $M_0=1$. Then for $\ell=1,\ldots,\ell_{\max}$:
\begin{enumerate}
    \item Derive the eligible set of predictors as $\mathcal{E} = \{j:|c_j(\lambda_{\ell-1})| \geq \alpha\left(\lambda_{\ell} + M_{\ell-1}(\lambda_\ell - \lambda_{\ell-1})\right)\}$, where $M_{0}=1$ and
    \begin{equation}\label{e:adaptive_slope}
        M_\ell = \frac{\max_{j=1,\ldots,p}|c_j(\lambda_{\ell-1}) - c_{j}(\lambda_{\ell})|}{\alpha(\lambda_{\ell-1} - \lambda_{\ell})}.
    \end{equation}
    \item Solve the problem \eqref{e:huber_regression_partial} using only the predictors $j\in\mathcal{E}$. Check the KKT conditions for $j\notin\mathcal{E}$ described in \eqref{e:KKT_check}, and include any $j$ that violates the KKT conditions back into $\mathcal{E}$.
    \item Compute $M_{\ell}$.
\end{enumerate}
We conduct a numerical experiment comparing the two rules: the strong screening rule and its adaptive version described in this section. The results are presented in Section \ref{sse:experiment_rules}.

In \texttt{rome}, the KKT condition checks in Section \ref{sse:kkt} are performed for variables outside the set $\mathcal{E}$ at Step 2 of the adaptive sequential strong rule in Section \ref{sse:screen}. But in principle, they can be applied independently of the screening rule; if the screening rule is not applied, one can simply set $\mathcal{E}=\emptyset$ so that $\mathcal{E}^{c}=\{1,2,\ldots,p\}$. In this case, the algorithm updates all variables at every iteration.

Algorithm \ref{alg:huber_cd_full} summarizes the coordinate descent procedure from Section \ref{sse:regression}, incorporating the computational strategies for the pathwise scheme in Sections \ref{sse:kkt}--\ref{sse:screen}.

{\footnotesize
\begin{algorithm}[H]
\caption{Pathwise coordinate descent for regularized Huber regression}
\label{alg:huber_cd_full}
\DontPrintSemicolon
\SetKwInOut{Input}{Input}\SetKw{Init}{Initialize}
\Input{$\{y_i, X_i\}_{i=1,\ldots,n}$, $\delta, \alpha$, $\{\lambda_{\ell}\}_{\ell=0,1,\ldots,\ell_{\max}}$.}
\Init $\hat{\beta}^{(0,0)}=0$, $M_0=1$, and set $r_i^{(0,0)}=y_i$ for $i=1,\ldots,n$\;
\For{$\ell = 1$ \KwTo $\ell_{\max}$}{
    \tcp{Warm start from previous solution (Section \ref{sse:warm})}
    Set $\hat{\beta}^{(0,0)} \leftarrow \hat{\beta}(\lambda_{\ell-1})$, $r_i^{(0,0)} \leftarrow y_i-X_i^{\top}\hat{\beta}^{(0,0)}$ for $i=1,\ldots,n$ and compute $c_j(\lambda_{\ell-1})=-\frac{1}{n}\sum_{i=1}^n X_{ij}\rho_{\delta}'(r_i^{(0,0)})$ for $j=1,\ldots,p$\;
    \tcp{Step 1: Adaptive sequential strong rule (Section \ref{sse:screen})}
    Compute the eligible set $\mathcal{E}=\left\{j: |c_j(\lambda_{\ell-1})|\geq \alpha\left(\lambda_\ell + M_{\ell-1}(\lambda_\ell - \lambda_{\ell-1})\right)
    \right\}$\;
    Set $k \leftarrow 1$\;
    \Repeat{convergence of $\hat{\beta}^{(k,p)}$}{
        \tcp{Step 2: Coordinate updates over the eligible set}
        \For{$j \in \mathcal{E}$}{
            Compute partial residuals $r_{ij}^{(k,j-1)}=\frac{y_i-\sum_{h\neq j}X_{ih}\hat{\beta}_h^{(k,j-1)}}{X_{ij}}$ for $i=1,\ldots,n$\;
            \tcp{KKT optimality check (Section~\ref{sse:kkt})}
            Evaluate $F'(0;\mathcal{I}^{(k,j-1)}(0))$ using \eqref{e:slope} with residuals $r_{ij}^{(k,j-1)}$\;
            \eIf{$|F'(0\,;\,\mathcal{I}^{(k,j-1)}(0))| \leq \lambda_{\ell}\alpha$}{
                $\hat{\beta}_j^{(k,j)} \leftarrow \hat{\beta}_j^{(k-1,j)}$\;
            }{
                \tcp{Exact update via kink search (Section \ref{sse:regression})}
                Form and sort $2n$ kinks to obtain $v_1^{(k,j)}\leq v_2^{(k,j)}\leq\cdots\leq v_{2n}^{(k,j)}$, where the $n$ left endpoints $\{r_{ij}^{(k,j-1)}-\delta/|X_{ij}|\}_{i=1}^{n}$ are stored separately with their original indices retained as $i_1,\ldots,i_n$\;
                Set $S_0^{(k,j)} = -\frac{\delta}{n}\sum_{i=1}^n |X_{ij}|$, $F'(v_1^{(k,j)}) \leftarrow S_0^{(k,j)}+\lambda\alpha$, $A^{(k,j)} \leftarrow \lambda(1-\alpha)$, and $s \leftarrow 1$\;
                \For{$m = 1$ \KwTo $2n-1$}{
                    \lIf{$v_m^{(k,j)}$ is a left endpoint}{
                        $A^{(k,j)} \leftarrow A^{(k,j)} + X_{i_s j}^2/n$; 
                        $s \leftarrow s + 1$
                    }
                    $F'(v_{m+1}^{(k,j)}) \leftarrow F'(v_m^{(k,j)}) +
                    A^{(k,j)}(v_{m+1}^{(k,j)}-v_m^{(k,j)})$\;
                    \If{$F'(v_m^{(k,j)})<0\leq F'(v_{m+1}^{(k,j)})$}{
                        Set $\hat{\beta}_j^{(k,j)} \leftarrow v_m^{(k,j)}-\frac{F'(v_m^{(k,j)})}{A^{(k,j)}}$\;
                        \textbf{break}\;
                    }
                }
            }
            $r_i^{(k,j)} \leftarrow r_i^{(k,j-1)}- X_{ij}\left(\hat{\beta}_j^{(k,j)}-\hat{\beta}_j^{(k,j-1)}\right)$ for $i=1,\ldots,n$\;
        }
        $k \leftarrow k+1$\;
    }
    \tcp{Step 3: KKT violation check for $j \notin \mathcal{E}$}
    \For{$j \notin \mathcal{E}$}{
        Compute $r_{ij}^{(k,j-1)}$ and evaluate
        $|F'(0;\mathcal{I}^{(k,j-1)}(0))|$ using \eqref{e:slope}\;
        \If{$|F'(0;\mathcal{I}^{(k,j-1)}(0))|>\lambda_{\ell}\alpha$}{
            Add $j$ to $\mathcal{E}$\;
        }
    }
    \If{any $j$ added to $\mathcal{E}$}{
        Return to Step 2 and repeat until no violations remain\;
    }
    \tcp{Step 4: Update adaptive slope for next $\lambda$}
    Compute $c_j(\lambda_\ell)=-\frac{1}{n}\sum_{i=1}^nX_{ij}\rho_{\delta}'(r_i^{(k,p)})$ for $j=1,\ldots,p$, and set $M_\ell=\frac{\max_{j=1,\ldots,p} |c_j(\lambda_{\ell-1}) -
    c_j(\lambda_\ell)|}{\alpha(\lambda_{\ell-1} - \lambda_\ell)}$\;
    Store $\hat{\beta}(\lambda_\ell) \leftarrow \hat{\beta}^{(k,p)}$\;
}
\Return{$\{\hat{\beta}(\lambda_{\ell})\}_{\ell=1,\ldots,\ell_{\max}}$}\;
\end{algorithm}
}

\section{Numerical Examples}
\label{se:numerical}

In this section, we evaluate the numerical performance of the proposed algorithm, illustrate its convergence rate empirically, and demonstrate the computational advancements implemented in the package. Throughout the studies using synthetic data, we focus on the Lasso penalty by setting $\alpha=1$. Finally, we illustrate the practical utility of the algorithm by applying it to a real-world dataset.

\subsection{Comparison with Benchmark Methods}
\label{sse:experiment_benchmark}

We conduct numerical experiments to evaluate the performance of the proposed methods by comparing them with existing benchmarks. Specifically, we consider the two recent alternative methods described in Section \ref{sse:related}. First, the R implementation of the MM-based algorithm (ILAMM) is obtained from the author's GitHub repository\footnote{\href{https://github.com/XiaoouPan/ILAMM}{https://github.com/XiaoouPan/ILAMM}}. Second, for the coordinate descent algorithm based on a second-order local approximation, we utilize the R package \texttt{hqreg}, available on the Comprehensive R Archive Network (CRAN).

We generate data and compute the entire trajectory of the sequence $\{\lambda_{\ell}\}_{\ell=0,1,\ldots,100}$, using the same values for all methods. In this study, we compare the computation time (runtime) required to complete all computations for $\lambda_{1},\ldots,\lambda_{100}$ values, as well as the log-scaled root mean square error across the $\lambda$ values. The threshold $\delta$ is fixed at 0.5 and applied uniformly to all methods. Since the definition of the Huber loss function in \texttt{hqreg} differs slightly (by a factor of $\delta$), we adjust the values accordingly when using the $\lambda$ sequence.

The sparse model parameter is inspired by the setting in \cite{gu2018admm}. In addition, we consider columns of design matrices sampled from heavy-tailed distributions and/or with highly correlated covariance structures, possibly with a block structure. Specifically, the $p$-dimensional true coefficient vector $\beta$ is set as
\begin{displaymath}
    \beta =(2,0,1.5,0,0.8,0,1,0,1.75,0,0,0.75,0,0,0.3,\boldsymbol{0}_{p-16}^{\top})^{\top}\in\mathbb{R}^p.
\end{displaymath}
We then generate responses according to $y_i=X_{i}^{\top}\beta+e_{i}$, $i=1,\ldots,n$, where $n$ is the number of samples and $e_i \stackrel{i.i.d.}{\sim}\mathcal{N}(0,1)$. We vary the sample sizes across $n\in\{100,500,1000\}$ and the dimensions across $p\in\{100,500,1000\}$. The columns of the design matrix $X_i$ are generated with covariance matrices $\Sigma_X$ according to the following scenarios:
\begin{enumerate}
    \item Compound $\mathcal{N}$: $X_i\sim\mathcal{N}(\boldsymbol{0}_{p},\Sigma_{X}=(\mathbf{1}_{\{i=j\}}+0.8\times\mathbf{1}_{\{i\neq j\}}))$.
    \item AR $t_2$: $X_i\sim t_{2}(\boldsymbol{0}_{p},\Sigma_{X}=(0.8^{|i-j|}))$.
    \item Contaminated AR: $X_i=[X_{1:p-1},X_{p}]^{\top}$, where $X_{1:p-1}\sim\mathcal{N}(\boldsymbol{0}_{p-1},\Sigma_{X}=(0.8^{|i-j|}))$, $X_{p}\sim t_{1}$, and $X_{1:p-1} \perp X_{p}$.
    \item Block AR $(\mathcal{N},t_1)$: $X_i=[X_{1:p/2},X_{p/2+1:p}]^{\top}$, where $X_{1:p/2}\sim t_1(\boldsymbol{0}_{p/2},\Sigma_{X}=(0.2^{|i-j|}))$, $X_{p/2+1:p}\sim\mathcal{N}(\boldsymbol{0}_{p/2},\Sigma_{X}=(0.8^{|i-j|}))$, and $X_{1:p/2} \perp X_{p/2+1:p}$.
\end{enumerate}
These scenarios are designed to evaluate the methods under a range of correlation structures and tail behaviors, including both heavy-tailed and mixed distributions. They allow us to assess the robustness of the proposed algorithms in settings that mimic practical high-dimensional data challenges. The simulation is repeated 100 times for each setting.

\begin{figure}[t]
\begin{center}
\includegraphics[width=1\textwidth,height=0.4\textheight]
 {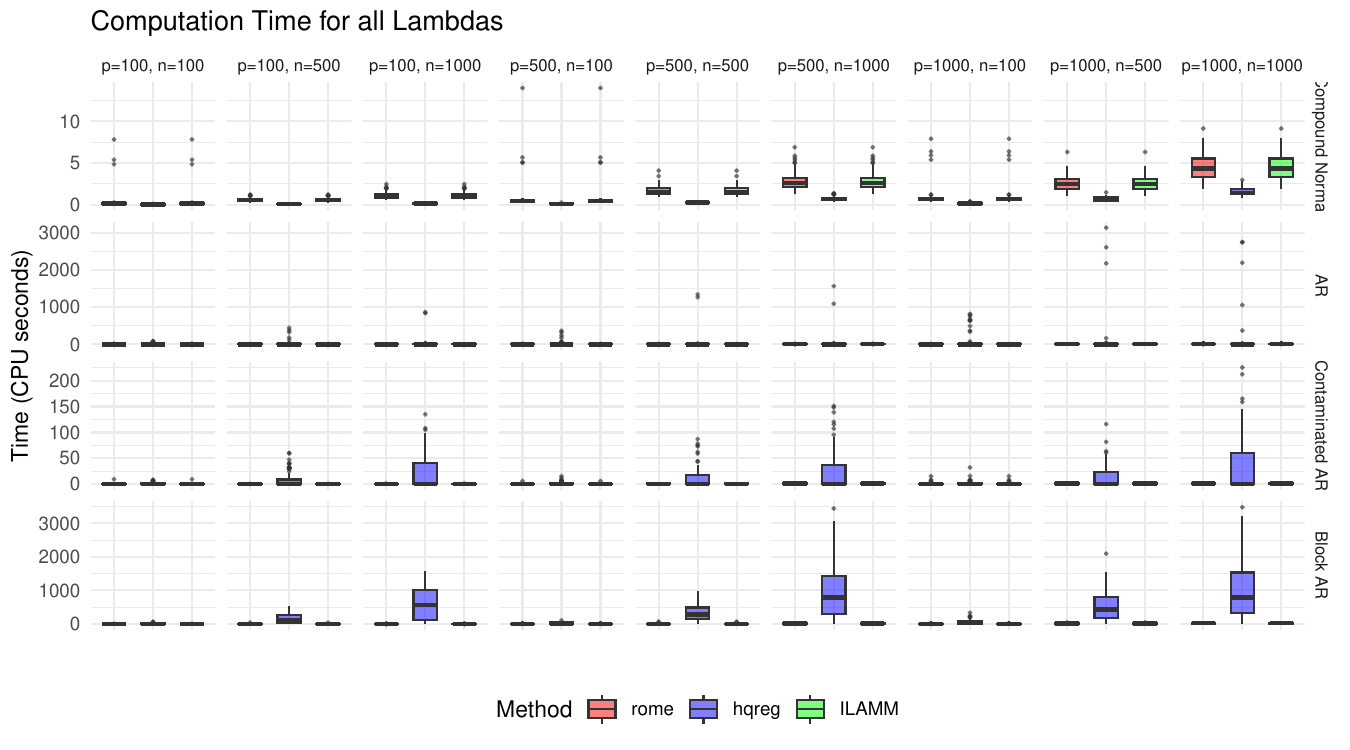}
 \end{center}
\caption{Boxplots of runtimes (CPU seconds) for the same 100 values of $\lambda$ across 100 replications. Each column corresponds to a different $(n,p)$ setting, and each row corresponds to one of four scenarios for $\{X_i\}$. The proposed method (\texttt{rome}) is shown in red, and the benchmarks (\texttt{hqreg}, \texttt{ILAMM}) are shown in blue and green, respectively.}
\label{fig:numerical_time}
\end{figure}

Figure \ref{fig:numerical_time} presents the runtime (measured in CPU seconds) across 100 independent replications. Note that the proposed methods are consistently faster than the alternative benchmarks across all combinations of $n$ and $p$, though their relative performance advantage varies by scenario. These results highlight the scalability of our approach and demonstrate its computational efficiency in high-dimensional regimes.

\begin{figure}[t]
\begin{center}
\includegraphics[width=1\textwidth,height=0.4\textheight]
{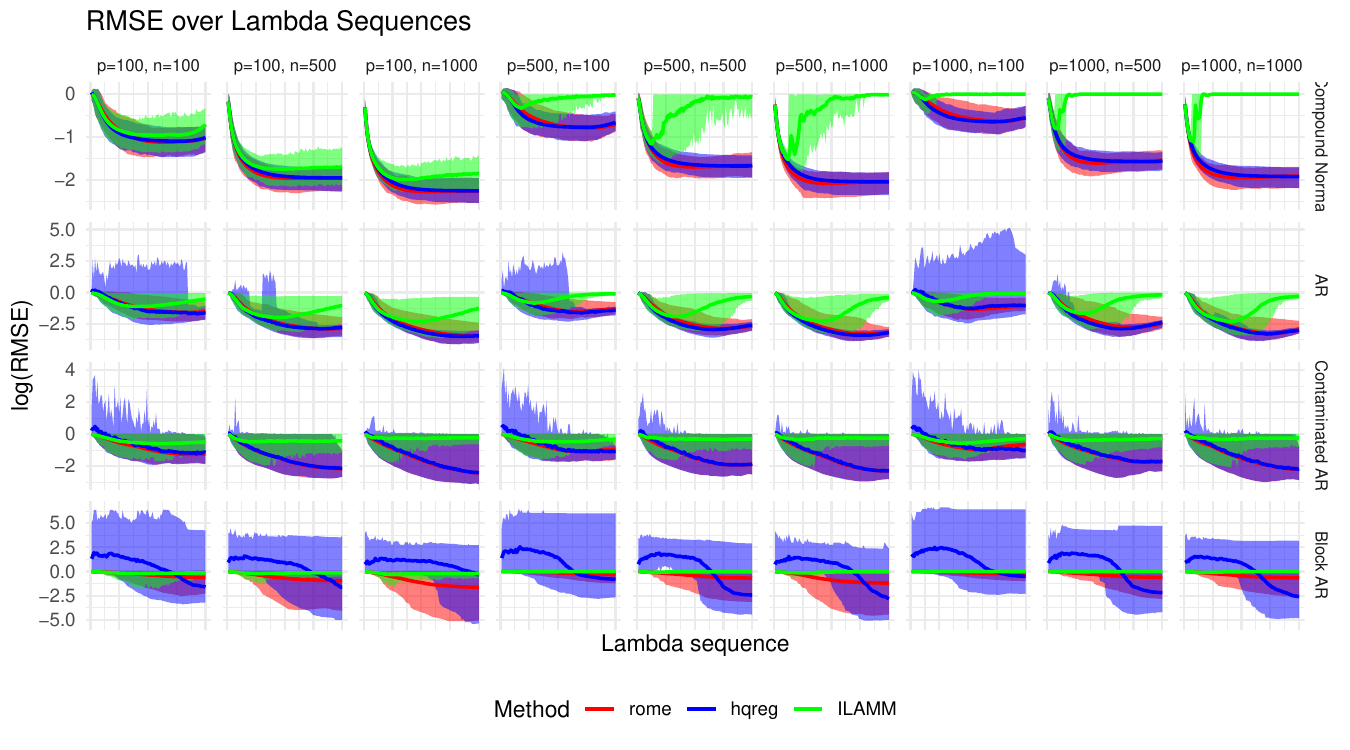}
\end{center}
\caption{Average normalized RMSE for the same 100 values of $\lambda$ across 100 replications. The mean RMSEs are shown as solid lines, and the shaded areas indicate the 5$^{\text{th}}$ and 95$^{\text{th}}$ percentile ranges of RMSEs. Each column corresponds to a different $(n,p)$ setting, and each row corresponds to one of four scenarios of $\{X_i\}$. The proposed method (\texttt{rome}) is shown in red, and the benchmarks (\texttt{hqreg}, \texttt{ILAMM}) are shown in blue and green, respectively.}
\label{fig:numerical_rmse}
\end{figure}

Figure \ref{fig:numerical_rmse} presents the log-scaled average normalized RMSE. The thick solid lines represent the averages over 100 independent replications at each grid point $\lambda_{\ell}$, with the shaded regions denoting the 5${^\text{th}}$ and 95$^{\text{th}}$ percentile bands. Across all combinations of $n$ and $p$, the RMSEs of the proposed method consistently decrease along the $\lambda$ sequence. In contrast, the alternative benchmarks struggle, exhibiting pronounced V-shaped curves in their averages. Notably, the \texttt{hqreg} method yields significantly higher RMSEs compared to both our approach and \texttt{ILAMM}. While its performance improves for sufficiently large $n$, it fails to provide reliable results across all high-dimensional cases. Given the data-generating process, these results demonstrate that our method is the most robust and reliable option under these challenging scenarios, maintaining stable performance across varying sample sizes and dimensions.

\subsection{Comparison of Convergence Rates}
\label{sse:experiment_convergence}

In this section, we present numerical comparisons of convergence rates for GD, CGD, and the proposed CD method to provide empirical support for the theoretical analysis in Proposition \ref{prop:convergence} (with the underlying derivations and details provided in Appendix \ref{ap:proofs_convergence}). As publicly available code implementing the GD and CGD algorithms for regularized Huber regression does not exist, we implemented these routines in C and wrapped them in an R interface. To ensure a fair comparison, the warm-start and screening rules described in Sections \ref{sse:warm} and \ref{sse:screen} for the CD method are not applied. Furthermore, since CGD is a hypothetical method constructed specifically for the proof of Proposition \ref{prop:convergence}, the corresponding figure is placed in Appendix \ref{ap:figures}.

We again consider the linear regression model $y_i = X_{i}^{\top}\beta + e_{i}$, $i=1,\ldots,n$, where $e_{i}\stackrel{i.i.d.}{\sim}\mathcal{N}(0,1)$. The rows of the design matrix $X \in \mathbb{R}^{n \times p}$ are generated according to the following two design matrices:
\begin{enumerate}
    \item[5.] AR $\boldsymbol{t}_4$: $X_i\sim t(\boldsymbol{0}_{p},\Sigma_{X}=(\rho^{|i-j|}))$.
    \item[6.] Block AR $(\boldsymbol{\mathcal{N},\mathcal{N}})$: $X_i=[X_{1:p/2},X_{p/2+1:p}]^{\top}$, where $X_{1:p/2}\sim \mathcal{N}(\boldsymbol{0}_{p/2},\Sigma_{X}=(\rho_1^{|i-j|}))$, $X_{p/2+1:p}\sim\mathcal{N}(\boldsymbol{0}_{p/2},\Sigma_{X}=(\rho_2^{|i-j|}))$, and $X_{1:p/2} \perp X_{p/2+1:p}$.
\end{enumerate}
The correlation parameters are fixed at $\rho=0.4$ and $(\rho_1,\rho_2)=(0.2,0.8)$, respectively. Following the simulation framework established in \cite{yi2017semismooth}, the true coefficient vector $\beta$ is defined as
\begin{equation}\label{e:true_coefficients}
    \beta_j = (-1)^j\exp(-(j-1)/10),\ j=1,\ldots,p,
\end{equation}
We consider the dimension and sample sizes $(p,n) \in \{(100,500), (1000,100), (1000,500)\}$ and fix $\delta=0.5$. The regularization parameter is set to $\lambda = \epsilon\lambda_0$, where $\lambda_0$ is computed via \eqref{e:lambda_max} with $\epsilon=0.001$. We track the objective function value using the solution obtained over 50 iterations $(k)$. As a single update in either CGD or CD modifies only a single coordinate of $\beta$, one full iteration is defined as $p$ consecutive coordinate updates for both algorithms.

Figure \ref{fig:numerical_convergence} in Appendix \ref{ap:figures} illustrates the objective values as a function of $k$. The results demonstrate that the proposed coordinate descent (CD) method achieves a significantly faster convergence rate than both GD and CGD across all evaluated combinations of dimensions and design matrices, echoing the findings of \cite{saha2013nonasymptotic} for the Lasso penalty in least-squares settings. While the standard GD algorithm shows negligible progress and CGD exhibits only moderate reduction, the proposed algorithm rapidly decreases the objective function value, reaching high precision within the first few iterations.

\subsection{Performance on Checking Optimality Conditions}
\label{sse:experiment_kkt}

We investigate the effectiveness of checking the KKT conditions described in Section \ref{sse:kkt}. In these simulations, no screening rules are applied. We compare the runtimes between (i) applying KKT condition checks (Applied) and (ii) not applying them (N/A), similar to \cite{kim2025pathwise}. 

We employ the data-generating processes 5 and 6 outlined in Section \ref{sse:experiment_convergence} and the $p$-dimensional true coefficient vector $\beta$ in \eqref{e:true_coefficients}. We vary the dimensions across $p\in\{50, 100,200,500,1000\}$ and the sample sizes across $n\in\{100,500\}$, respectively. We fix $\delta=0.5$. The reported results represent the average CPU time (in seconds) over 100 replications. For each replication, the runtime is measured by computing the optimization solutions using warm starts along a 100 $\lambda$-sequence constructed as described in Section \ref{sse:warm}.

Table \ref{tab:numerical_kkt} summarizes the computational results. Incorporating the KKT condition checks yields substantial performance gains over the baseline approach across all combinations of $n$ and $p$, as well as across both data-generating scenarios. Interestingly, the computational runtime scales predictably with respect to both the sample size and feature dimension under both configurations. While the growth in runtime remains modest relative to $p$, the more pronounced increase with respect to $n$ is driven by the internal sorting operations required by the Huber loss optimization routine.

\begin{table}[t]
\centering
\resizebox{0.75\columnwidth}{!}{%
\begin{tabular}{cccccccc}
\cline{4-8}
&  & KKT check & $p=50$ & $p=100$ & $p=200$ & $p=500$ & $p=1000$ \\ \cline{1-8}
\multirow{4}{*}{AR} 
& \multirow{2}{*}{$n=100$} & Applied & 0.270 & 0.593 & 1.188 & 1.942 & 1.733 \\ \cline{3-3}
&  & N/A & 0.829 & 1.691 & 3.523 & 7.723 & 13.585 \\ \cline{2-8}
& \multirow{2}{*}{$n=500$} & Applied & 0.930 & 1.380 & 2.605 & 7.959 & 16.690 \\ \cline{3-3}
&  & N/A & 4.138 & 9.643 & 20.241 & 52.171 & 105.090 \\ \cline{1-8}
\multirow{4}{*}{Block AR} 
& \multirow{2}{*}{$n=100$} & Applied & 0.269 & 0.578 & 1.117 & 1.298 & 1.176 \\ \cline{3-3}
&  & N/A & 0.862 & 1.636 & 3.492 & 6.804 & 12.650 \\ \cline{2-8}
& \multirow{2}{*}{$n=500$} & Applied & 0.806 & 1.307 & 2.475 & 7.168 & 14.566 \\ \cline{3-3}
&  & N/A & 4.668 & 10.151 & 20.671 & 52.210 & 104.116 \\ \cline{1-8}
\end{tabular}%
}
\caption{Average runtimes (CPU seconds) for the same 100 values of $\lambda$ across 100 replications. The cases where KKT checking is applied (Applied) and not applied (N/A) are compared.}
\label{tab:numerical_kkt}
\end{table}

\subsection{Performance on Strong Rules}
\label{sse:experiment_rules}

We validate the computational screening strategy described in Section \ref{sse:screen}, implemented in the \texttt{rome} package. Specifically, we assess the correctness of the selection rules by comparing (i) the proposed adaptive sequential strong rule (ASR) against (ii) the standard sequential strong rule (SSR), where the slopes in \eqref{e:adaptive_slope} are fixed at $M_{\ell} = 1$ for all $\ell$. Following the simulation framework of \cite{tibshirani2012strong}, we evaluate whether ASR or SSR incurs any screening violations. Along the regularization path $\{\lambda_{\ell}\}_{\ell=0,1,\ldots,100}$, a violation is defined as a predictor that violates the screening threshold but possesses a nonzero coefficient in the final optimization solution.

We again use the data-generating processes 5 and 6 in Section \ref{sse:experiment_convergence}. For the $p$-dimensional true coefficient vector $\beta$, $10\%$ of the entries are randomly selected to be 1, while the remaining entries are set to 0. We set $n=100$, vary the dimensions across $p\in\{100,500,1000\}$, and use $\delta=0.5$.  Each simulation setup is replicated 100 times. For each replication, the model is evaluated along a 100 $\lambda$-sequence constructed as described in Section \ref{sse:warm}.

Figure \ref{fig:numerical_violation} displays the result of violation rates. While the number of violations under the SSR framework tends to escalate as the dimension $p$ grows, particularly at smaller $\lambda$ values, the ASR framework consistently maintains nearly zero violations across the entire regularization path, regardless of the problem dimension. Interestingly, the proportion of SSR violations depends heavily on the specific $\lambda$ threshold and the feature dimension, whereas it remains relatively invariant across the different correlation scenarios. These empirical findings demonstrate that ASR provides a substantially more reliable and stable screening mechanism than SSR, particularly in high-dimensional settings.

\begin{figure}[h]
\begin{center}
\includegraphics[width=1\textwidth,height=0.4\textheight]
{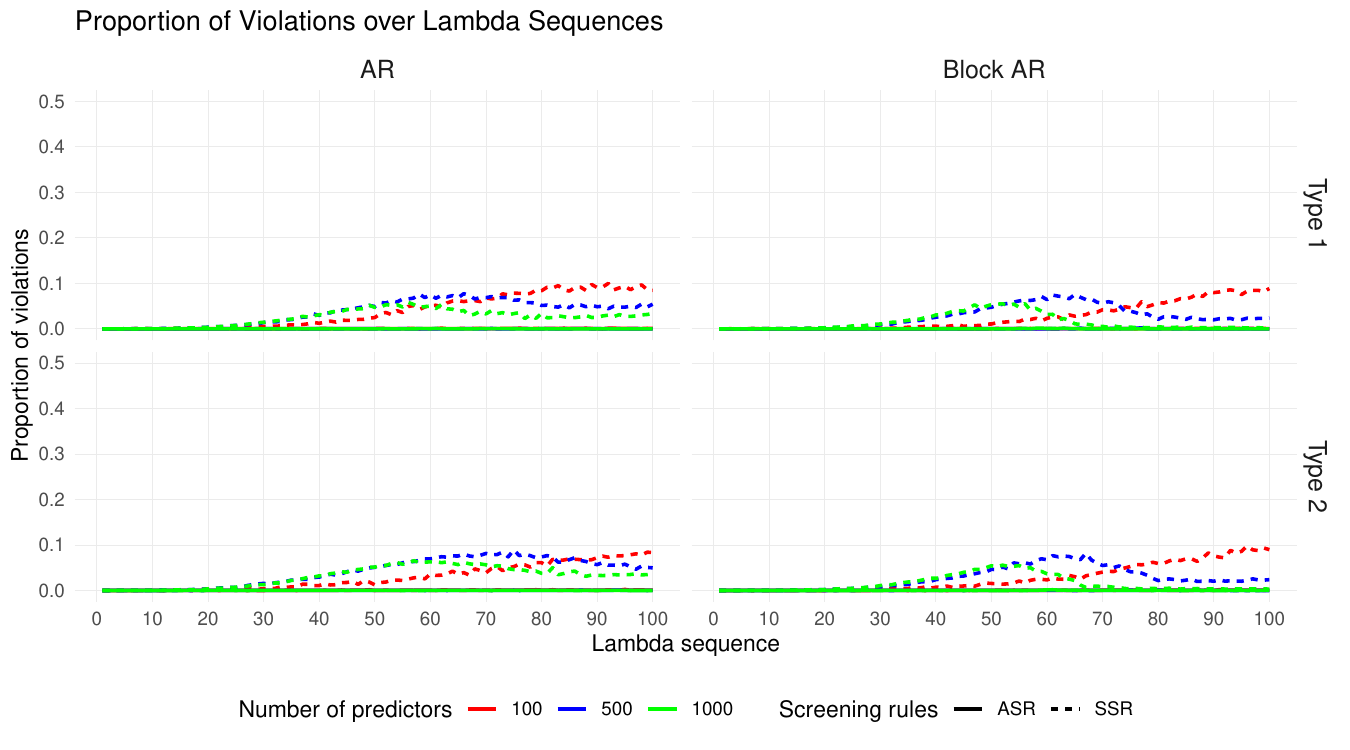}
\end{center}
\caption{Average proportion of violations among the total number of variables across 100 replications under different rules (ASR and SSR) with \texttt{rome}. The columns correspond to scenario 1 (AR) and scenario 2 (Block AR), while the rows present the results for type 1 and type 2. Within each panel, six cases are displayed; solid and dotted lines represent ASR and SSR, respectively, and the colors red, blue, and green correspond to $p=100$, $500$, and $1000$, respectively.}
\label{fig:numerical_violation}
\end{figure}

\subsection{Real Data Application}
\label{sse:realdata}

Finally, we present an application in which the characteristics of the dataset resemble the scenarios considered in the simulation studies. We analyze a dataset from electron-probe X-ray microanalysis of $n = 180$ archaeological glass vessels \citep{janssens1998composition}, where each vessel is represented by a spectrum across 1,920 frequencies and the contents of 13 chemical compounds are recorded. Our focus is on predicting the content of the 13$^{\text{th}}$ compound (PbO) using the spectral frequencies as predictors. Since values below frequency 15 and above 500 are nearly zero with little variability, we only retain frequencies 15–-500, resulting in $p = 486$ predictors. This dataset has been used in studies of high-dimensional robust regression \cite[e.g.,][]{maronna2011robust,smucler2017robust,loh2021scale}.

As seen in the left panel of Figure \ref{fig:real_data}, the dataset contains distinct groups of predictors characterized by sudden, high-magnitude scale shifts. These structural and persistent shifts give rise to highly pronounced block patterns among the variables. This phenomenon is reflected in the sample covariance matrix as in the right panel of Figure \ref{fig:real_data}, where tightly correlated variable clusters manifest as yellow blocks along the main diagonal. Conversely, strong negative dependency stands out sharply in purple off the diagonal. The presence of these outliers alongside highly coordinated intra-block correlation patterns highlights the need for methods capable of handling heavy-tailed innovations and predictor dependency.

\begin{figure}[ht!]
\centering
\begin{subfigure}[b]{0.45\textwidth}
\includegraphics[width=1\textwidth,height=0.2\textheight]{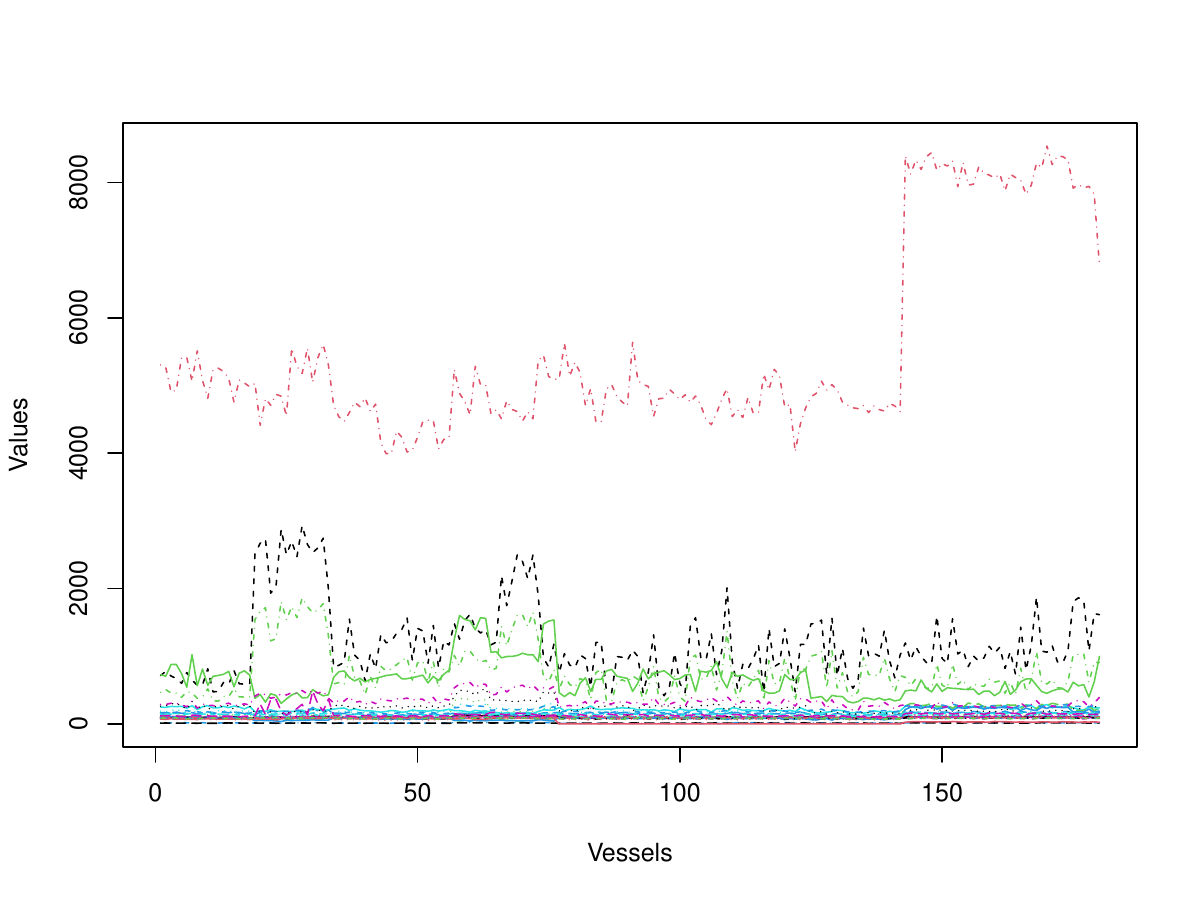}
\end{subfigure}
\begin{subfigure}[b]{0.45\textwidth}
\includegraphics[width=1\textwidth,height=0.2\textheight]{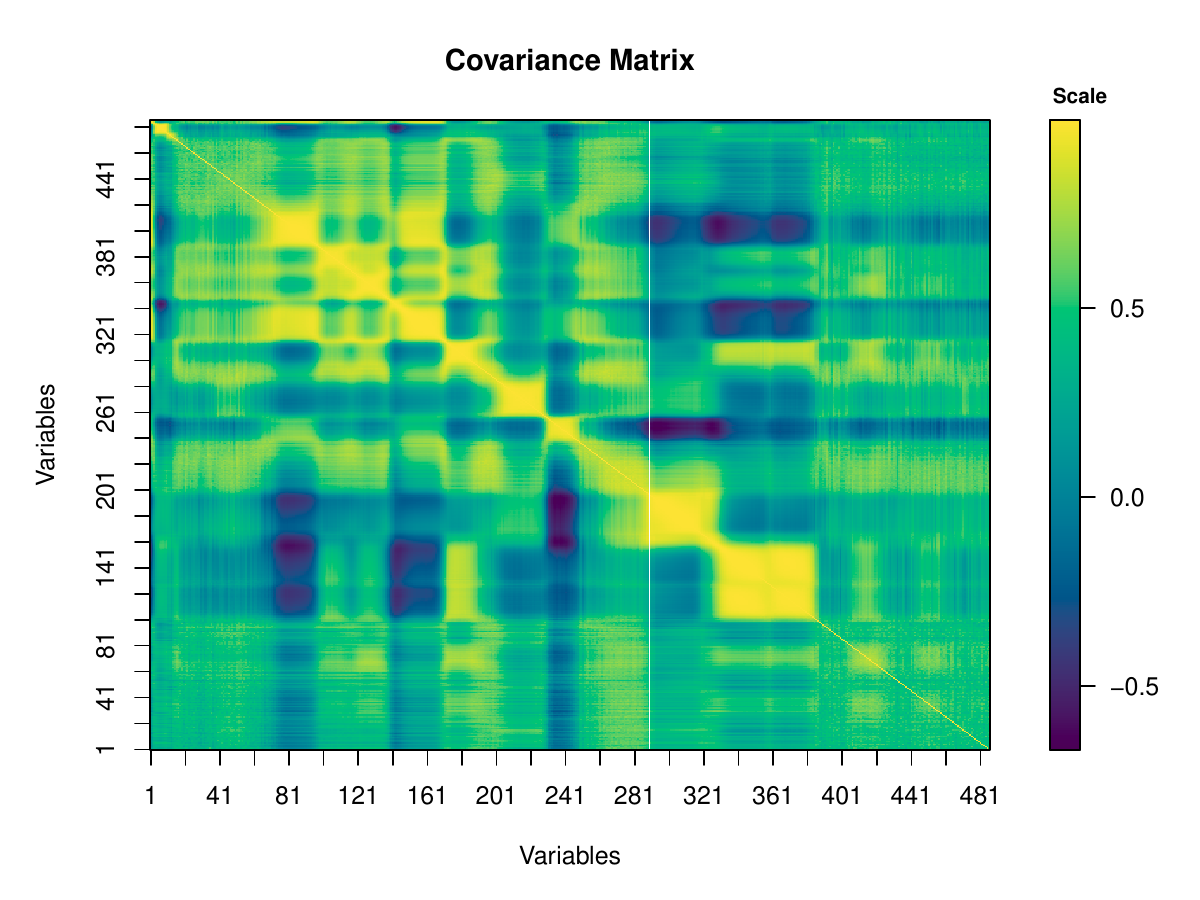}
\end{subfigure}
\caption{Plots of the X-ray microanalysis over 30 randomly chosen frequencies (left panel) and the sample covariance matrix of the covariates (right panel).}
\label{fig:real_data}
\end{figure}

First, we illustrate cross-validation results, similar to those shown in \cite{wu2008coordinate}. For a fixed $\alpha=0.3$, we select the optimal $\lambda$ via cross-validation by minimizing (i) the null deviance, (ii) the mean absolute error, and (iii) the root mean squared error. The results are displayed in Figure \ref{fig:real_cv} in Appendix \ref{ap:figures}. Note that the three metrics do not show substantial differences, and all exhibit a weak U-shape across the $\lambda$ values. The two thick lines indicate the $\lambda$ values chosen: one minimizes the metric, and the other represents the largest $\lambda$ for which the cross-validated error is within one standard error from the minimum. This approach is widely used when interpreting cross-validation results \cite[e.g.,][]{friedman2010regularization}.

Next, we report the number of selected nonzero coefficients based on the full dataset, alongside the model's predictive performance, following the validation frameworks established in existing penalized robust regression literature \cite[e.g.,][]{peng2015iterative,gu2018admm}. Specifically, to evaluate predictive performance, we generate 50 random partitions of the data. Each partition splits the dataset into a training set $I_{\textrm{train}}$ containing 120 glass vessels and a validation set $I_{\textrm{val}}$ containing 60 glass vessels. For each partition, the model parameters are estimated using 5-fold cross-validation on $I_{\textrm{train}}$, utilizing the null deviance as the optimal tuning criterion. The predictive performance is then evaluated on the validation set by computing the average prediction error
\begin{displaymath}
    \frac{1}{|I_{\textrm{val}}|} \sum_{i \in I_{\textrm{val}}} \left(y_i - X_i^{\top}\hat{\beta}(I_{\textrm{train}})\right)^2,
\end{displaymath}
where $|I_{\textrm{val}}| = 60$. We evaluate this performance across three distinct robustness thresholds $\delta \in \{0.5, 1.0, 1.5\}$. To highlight the advantage of employing a robust framework, we benchmark our results against regularized least squares, implemented via the \texttt{glmnet} package. To ensure a fair comparison, the same $\alpha$ is enforced, while the regularization parameter $\lambda$ for \texttt{glmnet} is selected independently using the package's cross-validation routine.

The results are displayed in Table \ref{tab:real_selection}. Among the choices of $\delta$ for the \texttt{rome} model, we observe a consistent monotonic trend: as $\delta$ increases, the number of nonzero coefficients selected decreases for both the full dataset and the random partitions. For the random partitions, the average number of selected variables is generally higher than that obtained from the full dataset, reflecting the structural variability introduced across different training subsets. In terms of predictive performance, the validation errors increase as the $\delta$ threshold grows, with $\delta=0.5$ achieving the best overall accuracy. Note that, while \texttt{glmnet} selects a significantly sparser model than \texttt{rome} both on the full data and across partitions, it yields a higher prediction error than the \texttt{rome} formulation at lower $\delta$ thresholds ($\delta = 0.5,1.0$). These results demonstrate that robust regression operates as desired, suggesting an effective trade-off between model sparsity and predictive precision. We are able to obtain a similar result with a different choice of $\alpha$.

\begin{table}[ht!]
\centering
\resizebox{0.75\columnwidth}{!}{%
\begin{tabular}{cc c cc}
 & \multicolumn{1}{c}{All Data} & & \multicolumn{2}{c}{Random Partition} \\ 
\cline{2-2} \cline{4-5} \rule{0pt}{3ex}%
Model & \# of nonzeros & & Average \# of nonzeros & Prediction Error \\ \hline
$\delta=0.5,\alpha=0.3$ & 193 & & 262.36 (92.535) & 0.025 (0.010) \\ \cline{1-1}
$\delta=1.0,\alpha=0.3$ & 185 & & 244.36 (103.774) & 0.029 (0.017) \\ \cline{1-1}
$\delta=1.5,\alpha=0.3$ & 164 & & 190.96 (73.081) & 0.059 (0.029) \\ \cline{1-1}
$\texttt{glmnet},\alpha=0.3$ & 38  & & 46.54 (8.721)   & 0.031 (0.016) \\ \hline
\end{tabular}%
}
\caption{Analysis of the X-ray microanalysis glass vessels data reported in \cite{janssens1998composition} by \texttt{rome} with three different values of $\delta$ and \texttt{glmnet} with the same $\alpha$. For the random partition, the mean and standard deviation (in parentheses) for the 50 replications are reported.}
\label{tab:real_selection}
\end{table}

\section{Discussion}
\label{se:discussion}

The proposed coordinate descent framework achieves superior performance relative to state-of-the-art benchmark methods under highly challenging statistical regimes. Specifically, in the presence of ill-conditioned Hessians driven by highly correlated covariates and heavy-tailed error distributions, our method yields advantages in computational time and estimation accuracy, demonstrating its exceptional numerical robustness and efficiency. Theoretical convergence guarantees the empirical performance gains observed alongside a suite of acceleration strategies, including an adaptive sequential strong rule and optimality verifications. Furthermore, we demonstrate the practical utility of our framework via a real-world data application where these problematic data structures are explicitly realized. To our knowledge, this work introduces the first-order coordinate descent algorithm for regularized Huber regressions.

Despite these advances, several directions remain for future research. First, the method cannot be directly extended to robust loss functions, such as the Cauchy or Tukey biweight losses, since their interval-dependent slopes make index tracking prohibitively complex. Further research is needed to develop effective approximations for these cases. Second, while our screening rules drastically reduce empirical runtimes, a deeper theoretical investigation into their properties is warranted. Finally, extending the proposed framework to estimating location and scale parameters of regularized robust regression would be interesting.

\section*{Acknowledgments}

SB acknowledges partial support from NSF CAREER award DMS-2239102, NSF awards DMS-1812128, DMS-2210675, and NIH awards R01GM135926, R21NS120227.

\section*{Data Availability Statement}

The R package \texttt{rome} and the code used for the simulation studies and the data application in Section \ref{se:numerical} are available on GitHub at \href{https://github.com/yk748/rome}{https://github.com/yk748/rome}. The dataset used in Section \ref{sse:realdata} is also provided in the repository.

{\small
\bibliographystyle{apalike}
\bibliography{ref}
}

\appendix
\numberwithin{equation}{section}  
\numberwithin{figure}{section}
\numberwithin{table}{section}

\section{Appendix}

\subsection{Additional Figures}
\label{ap:figures}

\begin{figure}[h]
\centering
\includegraphics[width=0.95\textwidth]{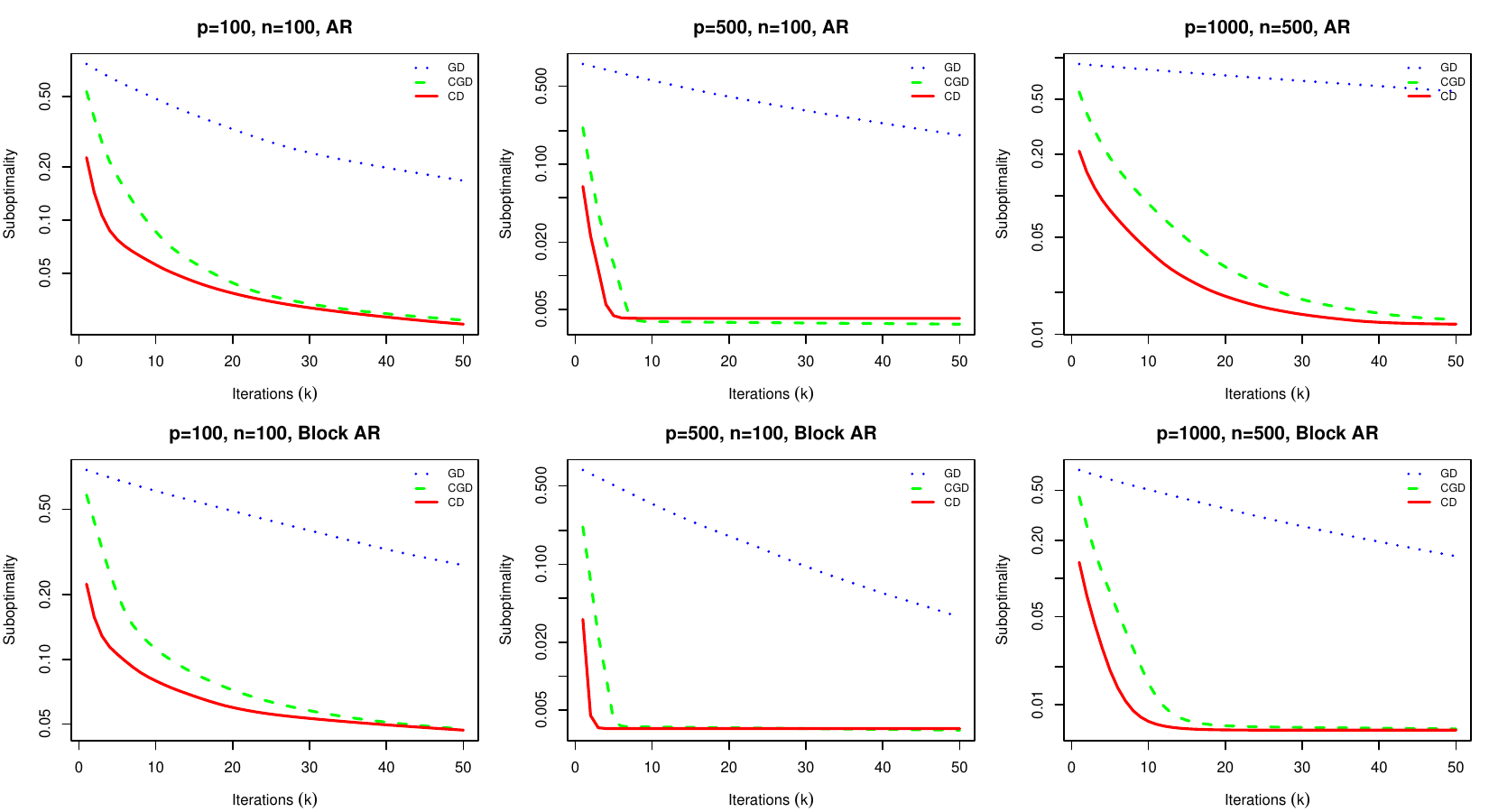}
\caption{Convergence speed, represented by the values of the objective function of the gradient descent (GD), coordinate gradient descent (CGD), and proposed coordinate descent (CD) methods over iterations $(k)$, represented by dotted blue, dashed green, and solid red lines, respectively.}
\label{fig:numerical_convergence}
\end{figure}

\begin{figure}[h]
\centering
\begin{subfigure}[b]{0.3\textwidth}
\includegraphics[width=1\textwidth,height=0.2\textheight]{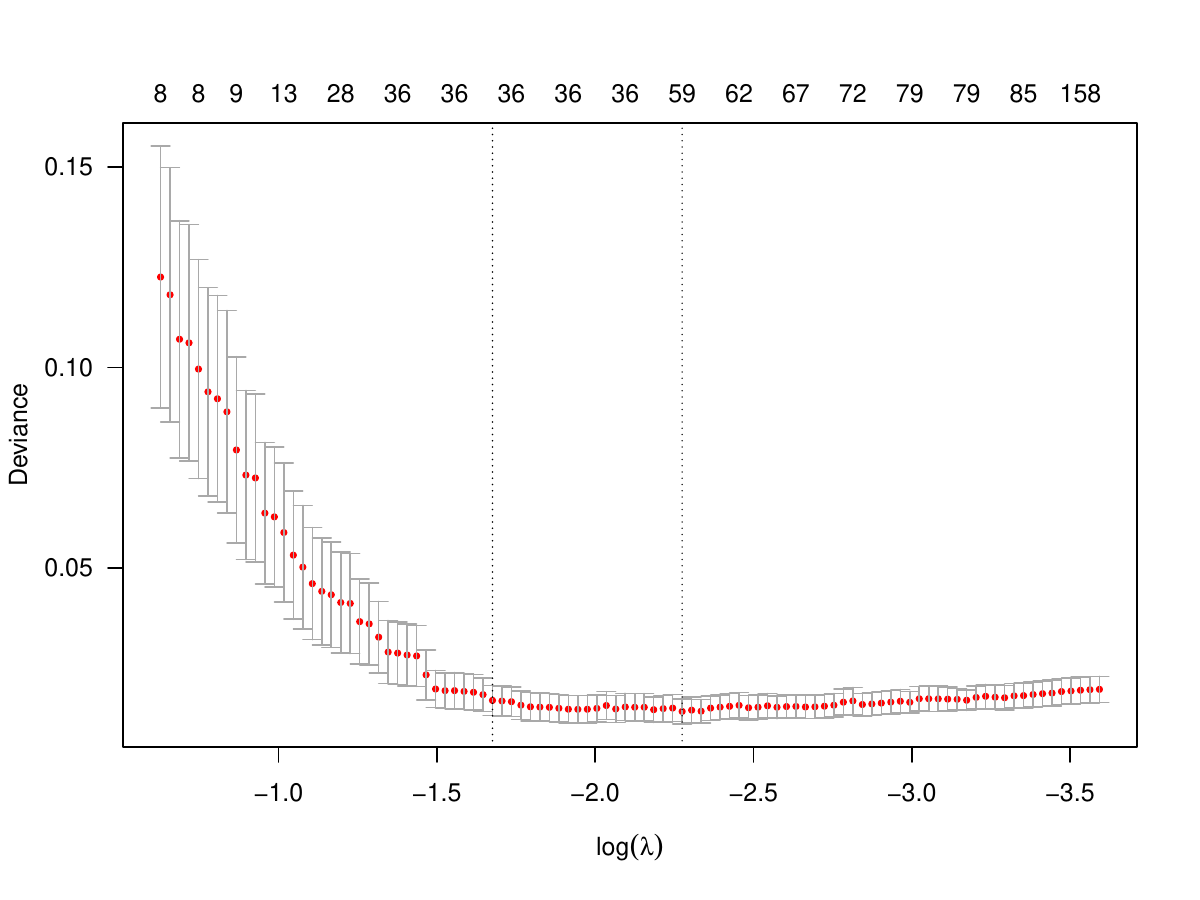}
\end{subfigure}
\begin{subfigure}[b]{0.3\textwidth}
\includegraphics[width=1\textwidth,height=0.2\textheight]{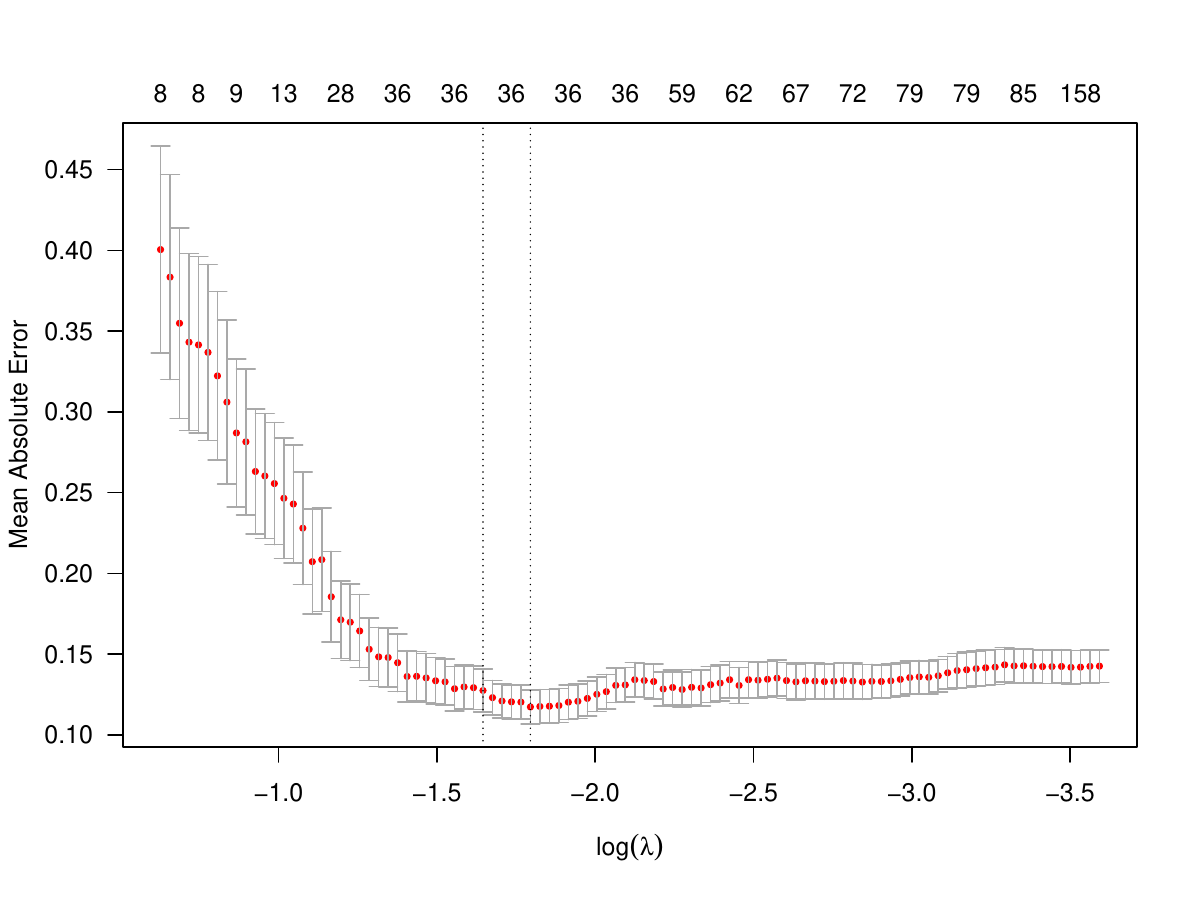}
\end{subfigure}
\begin{subfigure}[b]{0.3\textwidth}
\includegraphics[width=1\textwidth,height=0.2\textheight]{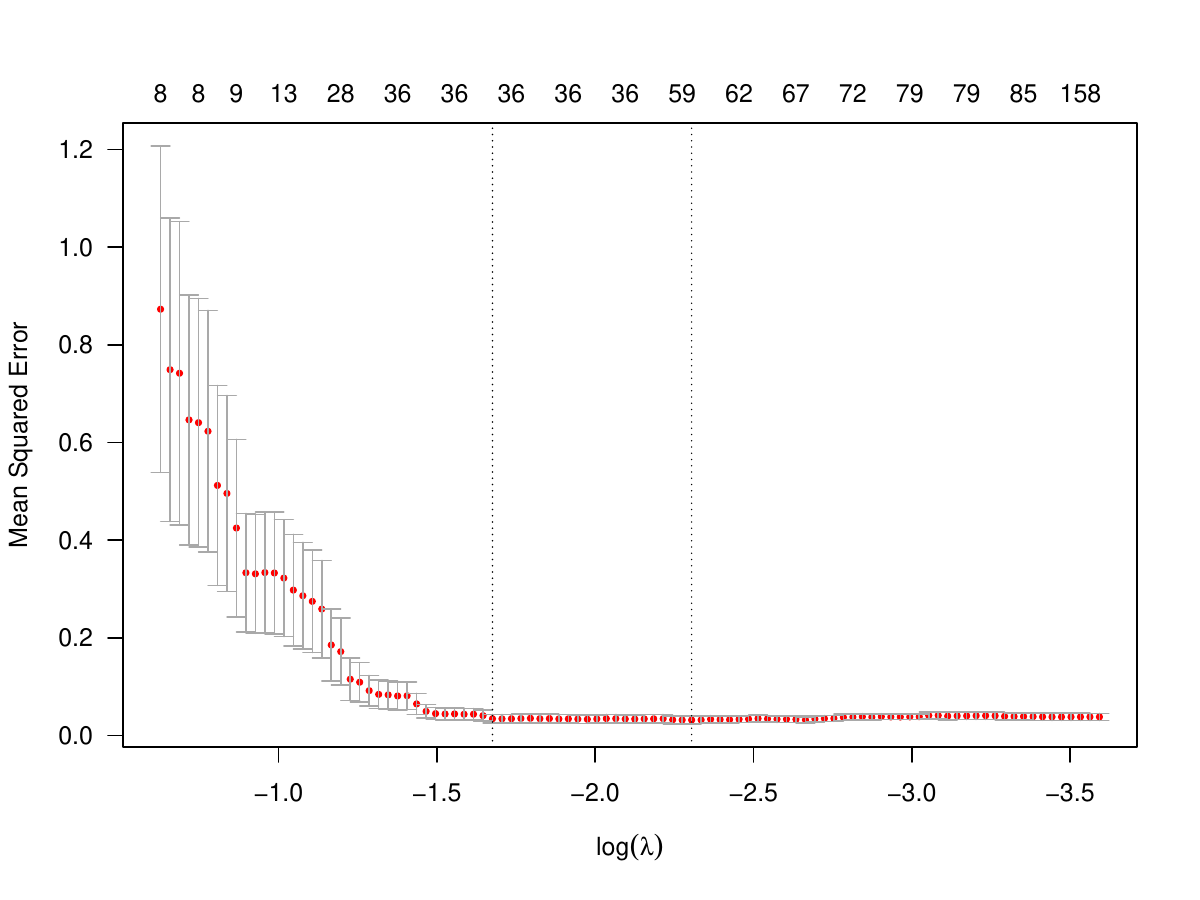}
\end{subfigure}
\caption{Plot of 10-fold cross-validation error and number of predictors versus 100 values of $\lambda$ for $\delta=0.5$ and $\alpha=0.3$. The evaluation criteria used are null deviance (left panel), mean absolute error (middle panel), and root mean square error (right panel), respectively.}
\label{fig:real_cv}
\end{figure}

\newpage
\subsection{Proof of Proposition \ref{prop:convergence}}
\label{ap:proofs_convergence}

In this section, we provide a complete proof of Proposition \ref{prop:convergence} in Section \ref{sse:conv}. Throughout, we use the superscript $(k,j)$, $k\geq0,j=1,\ldots,p$, notation, where $k$ denotes the iteration and $j$ the coordinate just updated. 

We follow the proof strategy of \cite{saha2013nonasymptotic}. The argument proceeds in three steps.

\textbf{Step 1: Bounding the coordinate slope $\tau_{j,k}$.} The proof requires two assumptions: (i) the operator $\beta \mapsto \beta-\nabla f(\beta)/L_H$ is monotonically increasing for all coordinates in $\beta\in\mathbb{R}^p$; and (ii) the coordinate slope $\tau_{j,k}$, defined as the secant
\begin{equation}\label{e:secant}
    \tau_{j,k} 
    = \frac{g_j'(\beta_j^{(k,j)}) - g_j'(\beta_j^{(k,j-1)})}{\beta_j^{(k,j)} - \beta_j^{(k,j-1)}},
\end{equation}
is bounded above by $L_H$ for all $j=1,\ldots,p$ and $k\geq1$, where
\begin{displaymath}
    g_j'(\beta_j) 
    = \left[\nabla f\!\left(\beta_1^{(k,j-1)},\ldots,\beta_{j-1}^{(k,j-1)},\beta_j,\beta_{j+1}^{(k,j-1)},\ldots,\beta_p^{(k,j-1)}\right)\right]_j
\end{displaymath}
is the partial derivative of $f$ with respect to $\beta_j$, with all other coordinates fixed at their most recently updated values $\beta_h^{(k,j-1)}$, $h\neq j$. Assumption (i) holds under the assumption of $L_H$-Lipschitz continuity of $\nabla f$ in Step 2, which guarantees that each component of $\beta\mapsto\beta-\nabla f(\beta)/L_H$ is nondecreasing, regardless of the penalty. For assumption (ii), define
\begin{displaymath}
    g_j(z) 
    = \frac{1}{n}\sum_{i=1}^n\rho_\delta\left(y_i - \sum_{h\neq j}X_{ih}\beta_h^{(k,j-1)} - X_{ij}z\right).
\end{displaymath}
Since $\rho_\delta''(u)=\mathbf{1}_{\{|u|\leq\delta\}}\in\{0,1\}$, the second derivative satisfies
\begin{displaymath}
    g_j''(z) 
    = \frac{1}{n}\sum_{i=1}^n X_{ij}^2\mathbf{1}_{\{|r_i(z)|\leq\delta\}} 
    \leq \frac{1}{n}\sum_{i=1}^n X_{ij}^2 
    \leq \frac{1}{n}\mathrm{eig}_{\max}(X^{\top}X) = L_H.
\end{displaymath}
By the mean value theorem, there exists $\xi$ between $\beta_j^{(k,j)}$ and $\beta_j^{(k,j-1)}$ such that $\tau_{j,k}=g_j''(\xi)\leq L_H$, independently of $j$ and $k$. Note that $\tau_{j,k}$ plays the role of the secant approximation to the true curvature of $g_j$, and unlike the Lasso case, where the coordinate slope is constant, it varies with both $j$ and $k$ due to the piecewise quadratic nature of the Huber loss.

\textbf{Step 2: Lipschitz bound on $\nabla f$.} For any $\beta,\beta'\in\mathbb{R}^p$, since 
$|\rho_\delta'(u)-\rho_\delta'(v)|\leq|u-v|$ for all $u,v\in\mathbb{R}$,
\begin{align*}
    \|\nabla f(\beta)-\nabla f(\beta')\|_2 
    &= \frac{1}{n}\|X^{\top}(\psi(r(\beta))-\psi(r(\beta')))\|_2 \\
    &\leq \frac{1}{n}\|X^{\top}\|_2\|X\|_2\|\beta-\beta'\|_2 
    = \frac{1}{n}\mathrm{eig}_{\max}(X^{\top}X)\|\beta-\beta'\|_2,
\end{align*}
where $\nabla f(\beta)=-\frac{1}{n}X^{\top}\psi(r(\beta))$, and $\psi(u) =(\rho_{\delta}'(u_1),\ldots,\rho_{\delta}'(u_n))^{\top}\in\mathbb{R}^n$. Hence, $L_H\leq\frac{1}{n}\mathrm{eig}_{\max}(X^{\top}X)$.

\textbf{Step 3: Comparison of GD, CGD, and CD.} Under assumptions (i) and (ii), we show that starting from the same initial point, the objective values of the coordinate descent (CD) iterates are always no larger than those of the coordinate gradient descent (CGD) iterates, which are in turn no larger than those of the gradient descent (GD) iterates. Since the GD iterates satisfy the standard $O(1/k)$ convergence rate \cite[Theorem 3.1 in][]{beck2009fast}, the CD iterates, which correspond to our proposed algorithm, satisfy \eqref{e:convergence_guarantee}.

Denote by $\{u^{(k)}\}$, $\{v^{(k,j)}\}$, and $\{\beta^{(k,j)}\}$ the solution paths of the regularized Huber regression \eqref{e:huber_regression} obtained via GD, CGD, and CD, respectively. 

\begin{itemize}
    \item \textbf{GD:} At each iteration $k$, GD updates the full vector as
    \begin{displaymath}
        u^{(k+1)} = S_{\lambda\alpha/(L_H+\lambda(1-\alpha))}\left(\frac{u^{(k)} -  \frac{1}{L_H}\nabla f(u^{(k)})}{1+\lambda(1-\alpha)/L_H}\right).
    \end{displaymath}
    This algorithm is introduced in Section \ref{sse:related}.
    \item \textbf{CGD:} At each iteration $k$, CGD updates the $j^{\textrm{th}}$ coordinates by following the one-dimensional overapproximation of $F(v)$ in \eqref{e:huber_regression} over $v_j$:
    \begin{align*}
        v_{j}^{(k,j)}&\in \argmin_{v_j\in\mathbb{R}}\Bigg\{ f(v^{(k,j-1)}) + \lambda\left(\sum_{h\neq j}|v_{h}^{(k,j-1)}| + \frac{(1-\alpha)}{2}\sum_{h\neq j}(v_{h}^{(k,j-1)})^2 \right) \\
        & + [\nabla f(v^{(k,j-1)})]_j(v_j-v_{j}^{(k,j-1)}) + \frac{L_H}{2}(v_j-v_{j}^{(k,j-1)})^2 + \lambda\left(\alpha|v_j| + \frac{(1-\alpha)}{2}v_j^2\right)\Bigg\} \\
        &= S_{\lambda\alpha/(L_H+\lambda(1-\alpha))}\left(\frac{v_j^{(k,j-1)} - \frac{1}{L_H}[\nabla f(v^{(k,j-1)})]_j}{1+\lambda(1-\alpha)/L_H}\right).
    \end{align*}
    \item \textbf{CD:} At each iteration $k$, CD updates the coordinates by exact minimization:
    \begin{align*}
        \beta_1^{(k,1)} 
        &\in \argmin_{\beta_1\in\mathbb{R}} 
        F(\beta_1, \beta_2^{(k,0)},\ldots, \beta_p^{(k,0)}), \\
        &\vdots \\
        \beta_p^{(k,p)} 
        &\in \argmin_{\beta_p\in\mathbb{R}} 
        F(\beta_1^{(k,1)},\ldots,\beta_{p-1}^{(k,p-1)},\beta_p),
    \end{align*}
    where $\beta^{(k,0)}:=\beta^{(k-1,p)}$. This is precisely the update performed by the algorithm in Section \ref{sse:regression}, solving \eqref{e:huber_regression_partial}.
\end{itemize}
For brevity, we focus on the $\ell_1$-penalty case ($\alpha=1$) in the lemmas below. The elastic net case ($\alpha\in(0,1)$) follows by replacing the soft-thresholding operator $S_{\lambda/\tau}(z_j)$ with the scaled and thresholded operator
\begin{displaymath}
    S_{\lambda\alpha/(\eta+\lambda(1-\alpha))}\left(\frac{z_j}{1+\lambda(1-\alpha)/\eta}\right),
\end{displaymath}
as derived in \eqref{e:gradient_descent} with a multiplication of the scalar $\frac{1}{1+\lambda(1-\alpha)/\eta}$ in the operator. All comparison arguments carry over verbatim under this substitution.

\begin{definition} A vector $z\in\mathbb{R}^p$ is a \emph{supersolution} (\emph{subsolution}, respectively) of \eqref{e:huber_regression} if 
and only if, for some $\tau>0$,
\begin{align}
    z 
    &\geq S_{\lambda/\tau}\left(z - \frac{\nabla f(z)}{\tau}\right), \label{e:supersolution} \\
    z 
    &\leq S_{\lambda/\tau}\left(z - \frac{\nabla f(z)}{\tau}\right). \label{e:subsolution}
\end{align}
If equality holds, i.e., $z=S_{\lambda/\tau}(z-\nabla f(z)/\tau)$, then $z$ is a minimizer of \eqref{e:huber_regression}, and conversely, since from the KKT conditions,
\begin{displaymath}
    0 \in [\nabla f(z)]_j + \lambda\partial|z_j|, \quad j=1,\ldots,p.
\end{displaymath}
In particular, any minimizer of \eqref{e:huber_regression} is automatically both a supersolution and a subsolution.
\end{definition}

\begin{lemma}\label{lem:coordinate_sufficient} Let $f$ be the smooth part of \eqref{e:huber_regression} with $L_H$-Lipschitz gradient, where $L_H\leq\frac{1}{n}\mathrm{eig}_{\max}(X^{\top}X)$. A sufficient condition for $z\in\mathbb{R}^p$ to be a supersolution is that $[\nabla f(z)]_j \geq \lambda$ for all $j$; a sufficient condition for $z$ to be a subsolution is that $[\nabla f(z)]_j \leq -\lambda$ for all $j$.
\end{lemma}
\begin{proof}
Fix $j$. If $[\nabla f(z)]_j \geq \lambda$, then
\begin{displaymath}
    z_j - \frac{[\nabla f(z)]_j}{L_H} 
    \leq z_j- \frac{\lambda}{L_H}.
\end{displaymath}
Since $S_{\lambda/L_H}(t)$ is nondecreasing and satisfies $S_{\lambda/L_H}(t) \leq t$ for all $t$, we have
\begin{displaymath}
    S_{\lambda/L_H}\left(z_j - \frac{[\nabla f(z)]_j}{L_H}\right) 
    \leq S_{\lambda/L_H}\left(z_j - \frac{\lambda}{L_H}\right) 
    \leq z_j - \frac{\lambda}{L_H} 
    \leq z_j.
\end{displaymath}
Thus, $z\geq S_{\lambda/L_H}(z - \nabla f(z)/L_H)$, so $\beta$ is a supersolution. For the subsolution case, if $[\nabla f(z)]_j \leq -\lambda$, then $z_j - [\nabla f(z)]_j/L_H \geq z_j + \lambda/L_H$, and since $S_{\lambda/L_H}(t) \geq t$ for $t \geq \lambda/L_H$, we obtain $S_{\lambda/L_H}(z_j - [\nabla f(z)]_j/L_H) \geq z_j$, so $z$ is a subsolution.
\end{proof}
From Steps 1 and 2, the Huber loss has a global Lipschitz constant $L_H$, and the secants for each $k$ and $j$ are uniformly bounded by that constant. Together with these results, Lemma \ref{lem:coordinate_sufficient} implies that the supersolution and subsolution concepts are likewise nondegenerate in the case of the Huber loss; consequently, the statement applied to GD and CGD in the Lasso case can be extended to the Huber case. Hence, we focus on the statements for CD.

The following are the technical lemmas to prove the key results. 
\begin{lemma}[cf.\ Lemmas 4.3, 4.4, 7.1, 8.1, and 8.2 in \cite{saha2013nonasymptotic}]\label{lem:key_lemmas}\quad
\begin{enumerate}
    \item[(a)] If \eqref{e:supersolution} holds for some $\tau>0$, then it holds for all $\tau>0$. A similar result holds for \eqref{e:subsolution}.
    \item[(b)] If $\beta$ is a supersolution (subsolution, respectively), then for any $j=1,\ldots,p$, the map
    \begin{displaymath}
        \tau 
        \mapsto S_{\lambda/\tau}\left(\beta_j - \frac{[\nabla f(\beta)]_j}{\tau}\right)
    \end{displaymath}
    is monotonically nondecreasing (nonincreasing, respectively) on $(0,\infty)$.
    \item[(c)] Fix $k\geq0$ and $j=1,\ldots,p$. For the CD updates, 
    let
    \begin{displaymath}
        g_j(\beta_j) 
        := f\left(\beta_1^{(k,j-1)},\ldots,\beta_{j-1}^{(k,j-1)},\beta_j,\beta_{j+1}^{(k,j-1)},\ldots,\beta_p^{(k,j-1)}\right).
    \end{displaymath}
    The coordinate minimizer $\hat{\beta}_j^{(k,j)}$ from \eqref{e:huber_regression_partial} can be written as
    \begin{displaymath}
        \hat{\beta}_j^{(k,j)} 
        = S_{\lambda/\tau_{j,k}}\left(\beta_j^{(k,j-1)} - \frac{g_j'(\beta_j^{(k,j-1)})}{\tau_{j,k}}\right),
    \end{displaymath}
    where $\tau_{j,k}$ is the coordinate slope defined in \eqref{e:secant}.
    \item[(d)] If $\beta$ is a supersolution and $\beta\leq\beta'$, or $\beta$ is a subsolution and $\beta\geq\beta'$, then $F(\beta)\leq F(\beta')$.
\end{enumerate}
\end{lemma}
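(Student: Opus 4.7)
The plan is to derive a $\tau$-free characterization of supersolutions and subsolutions, and then use it to dispatch all four parts. For part (a), I would carry out a coordinate-wise case analysis of $S_{\lambda/\tau}(\alpha_j - c_j/\tau)$, where $c_j := [\nabla f(\alpha)]_j$, across the three regimes of the shrinkage operator. In each regime a short algebraic manipulation eliminates $\tau$, and one finds that $\alpha$ is a supersolution if and only if $c_j \geq -\lambda$ whenever $\alpha_j \geq 0$ and $c_j \geq \lambda$ whenever $\alpha_j < 0$, with a mirror condition for subsolutions. Because this characterization does not involve $\tau$, (a) is immediate.

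For part (b), I would apply (a) and examine $h_j(\tau) := S_{\lambda/\tau}(\alpha_j - c_j/\tau)$ piecewise in $\tau$. The domain $(0, \infty)$ splits at the (at most two) breakpoints where $|\tau \alpha_j - c_j| = \lambda$, corresponding to the three soft-threshold regimes. In the middle ``kill'' regime $h_j \equiv 0$; in the outer regimes $h_j(\tau) = \alpha_j - (c_j \pm \lambda)/\tau$, which is manifestly monotone nondecreasing in $\tau$. The supersolution conditions from (a) ensure the correct sign of $(c_j \pm \lambda)$ so that the three pieces match continuously at the breakpoints and patch into a globally nondecreasing function. The subsolution case is symmetric (replace $\alpha$ by $-\alpha$ and $c$ by $-c$, using the oddness of $S$).

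For part (c), the statement is essentially the proximal optimality condition rewritten via the secant. The KKT condition for the one-dimensional subproblem $\min_{z} g_j(z) + \lambda |z|$ is equivalent to $z_j^{(k,j)} = S_{\lambda/\tau}(z_j^{(k,j)} - g_j'(z_j^{(k,j)})/\tau)$ for every $\tau > 0$. Setting $\tau = \tau_j$ and using the secant identity $g_j'(z_j^{(k,j)}) = g_j'(z_j^{(k,j-1)}) + \tau_j(z_j^{(k,j)} - z_j^{(k,j-1)})$ cancels $z_j^{(k,j)}$ inside the argument of $S$, yielding the displayed formula. For part (d), I would invoke convexity of $F$: the subgradient inequality gives $F(u) - F(v) \geq \xi^{\top}(u - v)$ for any $\xi \in \partial F(v) = \nabla f(v) + \lambda\, \partial \|v\|_1$. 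One can select $\xi$ componentwise nonnegative by taking $s_j = 1$ when $v_j > 0$, $s_j = -1$ when $v_j < 0$, and, when $v_j = 0$, choosing $s_j = -[\nabla f(v)]_j/\lambda \in [-1,1]$ if $|[\nabla f(v)]_j| \leq \lambda$ or $s_j = -1$ otherwise, which is feasible precisely because (a) gives $[\nabla f(v)]_j \geq -\lambda$. Combined with $u \geq v$ coordinate-wise, $\xi^{\top}(u - v) \geq 0$, hence $F(v) \leq F(u)$.

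The principal obstacle I anticipate is the bookkeeping in part (b): one must simultaneously track the sign of $\alpha_j$, the sign of $c_j$, which of the three soft-threshold regimes is active on each subinterval of $(0, \infty)$, and verify continuity and monotonicity across the regime boundaries without sign errors. Parts (a), (c), and (d) are then essentially one-step consequences of (a) together with standard convex-analytic facts.
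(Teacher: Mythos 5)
Your proposal is correct, and it actually does more work than the paper does: for parts (a), (b), and (d) the paper simply defers to Lemmas 4.3, 4.4, 8.1, and 8.2 of \cite{saha2013nonasymptotic} ("the proofs remain the same"), whereas you supply self-contained arguments. Your $\tau$-free characterization of supersolutions ($c_j \geq -\lambda$ when $\alpha_j \geq 0$, $c_j \geq \lambda$ when $\alpha_j < 0$) is the right invariant and does dispatch (a), (b), and (d) cleanly; in particular your observation that the "bad" outer regime of the soft-threshold map cannot occur under the supersolution hypothesis (e.g.\ regime $\tau\alpha_j - c_j < -\lambda$ forces $c_j > \lambda$ when $\alpha_j \geq 0$) is exactly the point that makes the piecewise monotonicity in (b) patch together, and your componentwise-nonnegative subgradient selection in (d) is valid since $f$ is differentiable so $\partial F(v) = \nabla f(v) + \lambda\,\partial\|v\|_1$. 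For part (c), the only part the paper proves explicitly, your route is essentially the paper's but streamlined: the paper starts from $0 \in g_j'(z_j^{(k,j)}) + \lambda\,\sign(z_j^{(k,j)})$, substitutes the secant identity, and then runs a three-way case analysis on the sign of $z_j^{(k,j)}$ to recover the soft-threshold form, whereas you invoke the prox fixed-point characterization $z = S_{\lambda/\tau}\bigl(z - g_j'(z)/\tau\bigr)$ and let the secant identity cancel $z_j^{(k,j)}$ inside the argument, which avoids the case analysis entirely. Both arguments require the update to be nontrivial (so that $\tau_j$ in \eqref{e:secant} is well defined and positive), which you correctly flag. The one cosmetic caveat is that in (d) you attribute the bound $[\nabla f(v)]_j \geq -\lambda$ to "(a)"; it is really the characterization you derive while proving (a), not the statement of (a) itself, so you should state that characterization as a standing lemma rather than cite (a) for it.
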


\begin{proof}
The proofs of (a), (b), and (d) remain the same as those in Lemmas 4.3, 4.4, 8.1, and 8.2 of \cite{saha2013nonasymptotic}. For (c), note that $\hat{\beta}_j^{(k,j)}$ is the minimizer of 
$g_j(\beta_j)+\lambda|\beta_j|$, and $\tau_{j,k}$ defined in \eqref{e:secant} satisfies $\tau_{j,k} \leq L_H$. The KKT condition for this univariate problem gives
\begin{equation}\label{e:KKT_partial}
    0 \in g_j'(\hat{\beta}_j^{(k,j)}) + \lambda\,\mathrm{sign}(\hat{\beta}_j^{(k,j)}) 
    = \tau_{j,k}(\hat{\beta}_j^{(k,j)} - \beta_j^{(k,j-1)}) + g_j'(\beta_j^{(k,j-1)}) + 
    \lambda\,\mathrm{sign}(\hat{\beta}_j^{(k,j)}).
\end{equation}
If $\hat{\beta}_j^{(k,j)}=0$, then from \eqref{e:KKT_partial}, we have
\begin{align*}
    &-\tau_{j,k}\beta_j^{(k,j-1)} + g_j'(\beta_j^{(k,j-1)}) - \lambda 
    \leq 0 \leq 
    -\tau_{j,k}\beta_j^{(k,j-1)} + g_j'(\beta_j^{(k,j-1)}) + \lambda \\
    \Leftrightarrow \quad
    &\beta_j^{(k,j-1)} - \frac{g_j'(\beta_j^{(k,j-1)})}{\tau_{j,k}} - \frac{\lambda}{\tau_{j,k}} 
    \leq 0 \leq 
    \beta_j^{(k,j-1)} - \frac{g_j'(\beta_j^{(k,j-1)})}{\tau_{j,k}} + \frac{\lambda}{\tau_{j,k}},
\end{align*}
which implies $0 = S_{\lambda/\tau_{j,k}}\!\left(\beta_j^{(k,j-1)} - g_j'(\beta_j^{(k,j-1)})/\tau_{j,k}\right)$. If $\hat{\beta}_j^{(k,j)}>0$, from $g_j'(\hat{\beta}_j^{(k,j)}) + 
\lambda = 0$, we have
\begin{align*}
    \hat{\beta}_j^{(k,j)} 
    &= \beta_j^{(k,j-1)} - \frac{g_j'(\beta_j^{(k,j-1)})}{\tau_{j,k}} - \frac{\lambda}{\tau_{j,k}} 
    = S_{\lambda/\tau_{j,k}}\left(\beta_j^{(k,j-1)} - \frac{g_j'(\beta_j^{(k,j-1)})}{\tau_{j,k}}\right).
\end{align*}
The case $\hat{\beta}_j^{(k,j)}<0$ follows analogously.
\end{proof}

The following is the key result from the solutions of GD, CGD, and CD.
\begin{lemma}[cf.\ Propositions 5.1, 6.1, and Lemma 7.2 in \cite{saha2013nonasymptotic}]\label{lem:GD_CGD_CD}
Assume that the operator $z \mapsto z - \nabla f(z)/L_H$ is monotonically increasing.
\begin{enumerate}
    \item[(a)] If $u^{(0)}$ is a supersolution (subsolution, respectively) and $\{u^{(k)}\}_{k\geq1}$ is generated by GD, then for all $k\geq0$,
    \begin{displaymath}
        u^{(k+1)} \leq u^{(k)} \ \& \ u^{(k)} \ \text{is a supersolution} \quad (u^{(k+1)} \geq u^{(k)} \ \& \ u^{(k)} \ \text{is a subsolution, respectively}).
    \end{displaymath}
    \item[(b)] If $v^{(0,0)}$ is a supersolution (subsolution, respectively) and $\{v^{(k,j)}\}$ is generated by CGD, then for all $k\geq0$ and $j=1,\ldots,p$,
    \begin{displaymath}
        v^{(k+1,p)} \leq v^{(k,p)} \ \& \ v^{(k,p)} \ \text{is a supersolution} \ (v^{(k+1,p)} \geq v^{(k,p)} \ \& \ v^{(k,p)} \ \text{is a subsolution, respectively}).
    \end{displaymath}
    \item[(c)] If $\beta^{(0,0)}$ is a supersolution (subsolution, respectively) and $\{\beta^{(k,j)}\}$ is generated by CD, then for all $k\geq0$ and $j=1,\ldots,p$,
    \begin{displaymath}
        \beta^{(k+1,p)} \leq \beta^{(k,p)} \ \& \ \beta^{(k)} \ \textrm{is a supersolution} 
        \ (\beta^{(k+1,p)} \geq \beta^{(k,p)} \ \& \ \beta^{(k,p)} \ \textrm{is a subsolution, respectively}).
    \end{displaymath}
    \end{enumerate}
\end{lemma}
\begin{proof}
The proofs of (a) and (b) are identical to those of Propositions 5.1 and 6.1 in \cite{saha2013nonasymptotic}. For (c), we proceed by induction on $j$. Consider the case $j=1$. Since $\beta^{(k,0)}$ is a supersolution,
\begin{displaymath}
    \hat{\beta}_1^{(k,1)} 
    = S_{\lambda/\tau_{1,k}}\left(\beta_1^{(k,0)} -  \frac{g_1'(\beta_1^{(k,0)})}{\tau_{1,k}}\right) 
    \leq S_{\lambda/L_H}\left(\beta_1^{(k,0)} - \frac{g_1'(\beta_1^{(k,0)})}{L_H}\right) 
    \leq \beta_1^{(k,0)},
\end{displaymath}
where the first inequality uses statement (b) of Lemma \ref{lem:key_lemmas} and $\tau_{1,k}\leq L_H$, and the second uses the supersolution property of $\beta^{(k,0)}$. Hence, $\beta_1^{(k,1)}\leq\beta_1^{(k,0)}$ and $\beta_{j'}^{(k,1)}=\beta_{j'}^{(k,0)}$ for $j'>1$, so $\beta^{(k,1)}\leq\beta^{(k,0)}$.

For the inductive step, assume $\beta^{(k,j-1)} \leq \beta^{(k,0)}$ coordinatewise. By the same argument, we have
\begin{displaymath}
    \hat{\beta}_j^{(k,j)} 
    = S_{\lambda/\tau_{j,k}}\left(\beta_j^{(k,j-1)} - \frac{g_j'(\beta_j^{(k,j-1)})}{\tau_{j,k}}\right) 
    \leq S_{\lambda/L_H}\left(\beta_j^{(k,j-1)} - \frac{g_j'(\beta_j^{(k,j-1)})}{L_H}\right) \leq \beta_j^{(k,j-1)},
\end{displaymath}
so $\beta^{(k,j)} \leq \beta^{(k,j-1)} \leq \beta^{(k,0)}$. Next, from \eqref{e:KKT_partial} with $g_j'(\hat{\beta}_j^{(k,j)})+\lambda=0$, we have
\begin{align*}
    & \beta_j^{(k,j)} - \frac{g_j'(\beta_j^{(k,j)})}{\tau_{j,k}} 
    = \beta_j^{(k,j-1)} - \frac{g_j'(\beta_j^{(k,j-1)})}{\tau_{j,k}}, \\
    \Rightarrow
    & S_{\lambda/\tau_{j,k}}\left(\beta_j^{(k,j)} - \frac{g_j'(z_j^{(k,j)})}{\tau_{j,k}}\right) 
    = \beta_j^{(k,j)}.
\end{align*}
For $j'\neq j$, since $\beta_{j'}^{(k,j)}=\beta_{j'}^{(k,j-1)}$ and $z^{(k,j-1)} \geq \beta^{(k,j)}$ coordinatewise, we have
\begin{displaymath}
    g_{j'}'(\beta^{(k,j-1)}) - g_{j'}'(\beta^{(k,j)}) 
    \leq L_H\|\beta^{(k,j-1)} - \beta^{(k,j)}\|_2=0,
\end{displaymath}
which implies $g_{j'}'(\beta^{(k,j-1)}) \leq g_{j'}'(\beta^{(k,j)})$. Therefore,
\begin{displaymath}
    \beta_{j'}^{(k,j)} 
    = \beta_{j'}^{(k,j-1)} 
    \geq S_{\lambda/\tau_{j',k}}\left(\beta_{j'}^{(k,j-1)} - \frac{g_{j'}'(\beta^{(k,j-1)})}{\tau_{j',k}}\right) 
    \geq S_{\lambda/\tau_{j',k}}\left(\beta_{j'}^{(k,j)} - \frac{g_{j'}'(\beta^{(k,j)})}{\tau_{j',k}}\right).
\end{displaymath}
This holds for all $j'=1,\ldots,p$. Using $\tau_{j',k}\leq L_H$ and statement (b) of Lemma \ref{lem:key_lemmas}, we have
\begin{displaymath}
    \beta^{(k,j)} \geq S_{\lambda/L_H}\left(\beta^{(k,j)} - \frac{\nabla f(\beta^{(k,j)})}{L_H}\right),
\end{displaymath}
so $\beta^{(k,j)}$ is a supersolution. The subsolution case follows analogously, completing the induction.
\end{proof}

The following result allows us to compare the solutions of GD, CGD, and CD that satisfy the given conditions.
\begin{lemma}[cf.\ Theorems 6.2 and 7.3 in \cite{saha2013nonasymptotic}]\label{lem:comparison}
Let $u^{(0)}, v^{(0,0)}, z^{(0,0)}$ and the sequences $\{u^{(k)}\}$, $\{v^{(k,p)}\}$, $\{\beta^{(k,p)}\}$ be as in Lemma \ref{lem:GD_CGD_CD}.
\begin{enumerate}
    \item[(a)] If $u^{(0)} = v^{(0,0)}$ are supersolutions, then for all $k\geq0$, $v^{(k,p)}\leq u^{(k)}$. Conversely, if $u^{(0)} = v^{(0,0)}$ are subsolutions, then $v^{(k,p)}\geq u^{(k)}$.
    \item[(b)] If $v^{(0,0)} = z^{(0,0)}$ are supersolutions, then for all $k\geq0$, $\beta^{(k,p)}\leq v^{(k,p)}$. Conversely, if $v^{(0,0)}=\beta^{(0,0)}$ are subsolutions, then $\beta^{(k,p)}\geq v^{(k,p)}$.
    \end{enumerate}
\end{lemma}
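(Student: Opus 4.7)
The plan is to establish both parts by an outer induction on the iteration index $k$, together with an inner induction on the coordinate index $j$ within each sweep. The base case $k=0$ is immediate because all three algorithms start from the same vector. For the inductive step I would freely invoke Lemma~\ref{lem:GD_CGD_CM}, which guarantees that, along each trajectory, every iterate remains a supersolution in the supersolution case and a subsolution in the subsolution case, and that the sequence of iterates is monotone. That preservation is what allows me to apply Lemma~\ref{lem:key_lemmas}(b) and the hypothesis on $\alpha \mapsto \alpha-\nabla f(\alpha)/L_H$ at every intermediate vector that appears inside a single sweep.

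For part~(a), assume inductively that $v^{(k)} \leq u^{(k)}$ and that both are supersolutions. The GD update at coordinate $j$ evaluates the gradient at the full vector $u^{(k)}$, while CGD evaluates it at
\[
    w^{(k,j-1)} \;:=\; (v_1^{(k+1)},\ldots,v_{j-1}^{(k+1)},v_j^{(k)},\ldots,v_p^{(k)}),
\]
whose first $j-1$ entries are no larger than the corresponding entries of $v^{(k)}$ (by coordinate-wise monotonicity of the sweep, an immediate consequence of Lemma~\ref{lem:GD_CGD_CM}(b) applied one coordinate at a time) and therefore no larger than those of $u^{(k)}$. Using the hypothesis that $\alpha \mapsto \alpha - \nabla f(\alpha)/L_H$ is monotonically increasing, together with the monotonicity of $S_{\lambda/L_H}$, I obtain $v_j^{(k+1)} \leq u_j^{(k+1)}$. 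Closing the inner induction over $j$ yields $v^{(k+1)} \leq u^{(k+1)}$. The subsolution case is entirely symmetric, with every monotonicity direction flipped.

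For part~(b), the additional ingredient is Lemma~\ref{lem:key_lemmas}(c), which represents the CM update as
\[
    z_j^{(k,j)} = S_{\lambda/\tau_j}\!\Bigl( z_j^{(k,j-1)} - \tfrac{g_j'(z_j^{(k,j-1)})}{\tau_j}\Bigr),
\]
with $\tau_j \leq L_H$ by the Lipschitz bound recorded in Proposition~\ref{prop:convergence}. Assuming inductively that the CM and CGD intermediate vectors after the first $j-1$ coordinates have been updated satisfy $z^{(k,j-1)} \leq v^{(k,j-1)}$, and that $z^{(k,j-1)}$ is a supersolution by Lemma~\ref{lem:GD_CGD_CM}(c), I would chain
\begin{align*}
    z_j^{(k,j)}
    &= S_{\lambda/\tau_j}\!\Bigl(z_j^{(k,j-1)} - \tfrac{[\nabla f(z^{(k,j-1)})]_j}{\tau_j}\Bigr) \\
    &\leq S_{\lambda/L_H}\!\Bigl(z_j^{(k,j-1)} - \tfrac{[\nabla f(z^{(k,j-1)})]_j}{L_H}\Bigr) \\
    &\leq S_{\lambda/L_H}\!\Bigl(v_j^{(k,j-1)} - \tfrac{[\nabla f(v^{(k,j-1)})]_j}{L_H}\Bigr)
    = v_j^{(k,j)}.
\end{align*}
The first inequality is Lemma~\ref{lem:key_lemmas}(b), applied to the supersolution $z^{(k,j-1)}$ with $\tau$ shrinking from $L_H$ down to $\tau_j$; the second uses the monotonicity of the forward operator together with the inductive hypothesis $z^{(k,j-1)} \leq v^{(k,j-1)}$. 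Iterating over $j$ and then $k$ yields $z^{(k)} \leq v^{(k)}$. The subsolution case reverses both inequalities for the same structural reasons.

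The main obstacle I foresee is pure bookkeeping: the nested induction needs, at every step of every sweep, (i) that the relevant intermediate vector really is a supersolution, so that Lemma~\ref{lem:key_lemmas}(b) applies, and (ii) that the coordinate-wise inequality survives the single-coordinate update from $j-1$ to $j$. Both requirements are supplied by Lemma~\ref{lem:GD_CGD_CM}, which is precisely why that lemma was proved first; once its conclusions are invoked at each intermediate vector, the comparison between the three algorithms falls out cleanly.
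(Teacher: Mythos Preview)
Your proposal is correct and follows essentially the same approach as the paper. For part~(a) the paper simply cites Theorem~6.2 of \cite{saha2013nonasymptotic}, which is the nested induction you describe; for part~(b) the paper displays exactly your chain of inequalities for $z_j^{(k,j)}$, justifying the first step by Lemma~\ref{lem:key_lemmas}(b) (supersolution at $z^{(k,j-1)}$, $\tau_j\le L_H$) and the second by monotonicity of the forward map plus the inductive hypothesis, then defers the remaining inner/outer induction bookkeeping to Theorem~7.3 of \cite{saha2013nonasymptotic}---precisely the structure you spell out.
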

\begin{proof}
The proof of (a) is identical to that of Theorem 6.2 in \cite{saha2013nonasymptotic}. For (b), fix $k$ and note that for each coordinate $j$, using $\tau_{j,k}\leq L_H$ and statement (b) of Lemma \ref{lem:key_lemmas}, we have
\begin{align*}
    \hat{\beta}_j^{(k,j)}
    &= S_{\lambda/\tau_{j,k}}\!\left(\beta_j^{(k,j-1)} - 
    \frac{[\nabla f(\beta^{(k,j-1)})]_j}{\tau_{j,k}}\right) 
    \leq S_{\lambda/L_H}\left(\beta_j^{(k,j-1)} - \frac{[\nabla f(\beta^{(k,j-1)})]_j}{L_H}\right) \\
    &\leq S_{\lambda/L_H}\left(v_j^{(k,j-1)} - \frac{[\nabla f(v^{(k,j-1)})]_j}{L_H}\right) 
    = v_j^{(k,j)},
\end{align*}
where the second inequality uses $\beta^{(k,j-1)} \leq v^{(k,j-1)}$, established by induction over $j$ with the base case $\beta^{(k,0)} = v^{(k,0)}$. The proof is completed by induction over $k$, following the remaining steps of Theorem 7.3 in \cite{saha2013nonasymptotic}.
\end{proof}

Finally, the following result directly implies the convergence of the proposed method. Proposition \ref{prop:convergence} then follows by applying Lemma \ref{lem:main_theorem} with $z^{(0,0)} = \beta^{(0)}$ and identifying $\beta^{(k)} \equiv z^{(k,p)}$.
\begin{lemma}[cf.\ Theorem 8.3 in \cite{saha2013nonasymptotic}]\label{lem:main_theorem}
Starting from the same initial point $u^{(0)}=v^{(0,0)}=\beta^{(0,0)}$, the solution paths produced by GD, CGD, and CD satisfy, for all $k\geq1$,
\begin{equation}\label{e:comparisons}
    F(\beta^{(k,p)}) 
    \leq F(v^{(k,p)}) 
    \leq F(u^{(k)}) \leq F(\beta^*) + \frac{L_H\|u^{(0)} - \beta^*\|_2^2}{2k},
\end{equation}
where $\beta^*$ is any minimizer of \eqref{e:huber_regression}.
\end{lemma}
\begin{proof}
The first and second inequalities follow by combining statement (b) of Lemma \ref{lem:comparison} with statement (d) of Lemma \ref{lem:key_lemmas}, and statement (a) of Lemma \ref{lem:comparison} with statement (d) of Lemma \ref{lem:key_lemmas}, respectively. The last inequality is the standard $O(1/k)$ convergence rate of GD \cite[see also Theorem 3.1 in][]{beck2009fast}.
\end{proof}

\subsection{Proof of Proposition \ref{prop:KKT}}
\label{ap:proofs_kkt}

\begin{proof}
(a) For a given $\beta_j$, split the sum in the derivative of $F(\beta_j)$ according to whether $i\in\mathcal{I}(\beta_j)$ or $i\notin\mathcal{I}(\beta_j)$. Observations outside $\mathcal{I}(\beta_j)$ contribute the saturated value $\frac{\delta}{n}|X_{ij}|\sign(\beta_j-r_{ij})$ to the derivative. Summing these contributions and absorbing the sign gives the constant term $S_0 = -\frac{\delta}{n}\sum_{i=1}^n |X_{ij}|$, from which \eqref{e:slope} follows immediately.

(b) Between any two consecutive kinks $v_m<v_{m+1}$, the active set $\mathcal{I}(\beta_j)$ is constant, so $F'(\beta_j;\mathcal{I})$ is 
affine in $\beta_j$ on $(v_m,v_{m+1})$ with slope
\begin{displaymath}
    \frac{d}{d\beta_j}F'(\beta_j;\mathcal{I}) 
    = \frac{1}{n}\sum_{i\in\mathcal{I}}X_{ij}^2 + \lambda(1-\alpha) \geq 0,
\end{displaymath}
showing that $F'(\cdot\,;\mathcal{I})$ is monotonically nondecreasing. As $\beta_j$ crosses the left endpoint $r_{i_s j} - \delta/|X_{i_s j}|$ of observation $i_s$, the active set expands from $\mathcal{I}$ to $\mathcal{I}' = \mathcal{I}\cup\{i_s\}$, so $|\mathcal{I}'\setminus\mathcal{I}|=1$ and the cumulative slope increases by $X_{i_s j}^2/n \geq 0$. Since $F'(\beta_j;\mathcal{I})$ is continuous and affine on $[v_m,v_{m+1}]$, the condition $F'(v_m;\mathcal{I})F'(v_{m+1};\mathcal{I}')<0$ implies that 
$F'(\cdot;\mathcal{I})$ changes sign on $(v_m,v_{m+1})$. By the Intermediate Value Theorem, there exists $\hat{\beta}_j\in(v_m,v_{m+1})$ such that $F'(\hat{\beta}_j;\mathcal{I})=0$. The slope $A=\frac{1}{n}\sum_{i\in\mathcal{I}'}X_{ij}^2+\lambda(1-\alpha)>0$ is strictly positive since $\lambda(1-\alpha)\geq0$ and $\mathcal{I}'\neq \emptyset$, so $F'(\cdot;\mathcal{I})$ is strictly increasing on $(v_m,v_{m+1})$ and $\hat{\beta}_j$ is unique. Solving $F'(\hat{\beta}_j;\mathcal{I}) = F'(v_m;\mathcal{I})+A(\hat{\beta}_j-v_m)=0$ gives
\begin{displaymath}
    \hat{\beta}_j = v_m-\frac{F'(v_m;\mathcal{I})}{A}.
\end{displaymath}

(c) The minimizer $\hat{\beta}_j$ of \eqref{e:huber_regression_partial} satisfies the optimality condition $0\in \partial F(\hat{\beta}_j)$, where $\partial F$ denotes the subdifferential of $F$ with respect to $\beta_j$. Since the Huber loss and the ridge term are both differentiable, the subdifferential reduces to
\begin{displaymath}
    \partial F(\beta_j) = F'(\beta_j;\mathcal{I}(\beta_j)) + \lambda\alpha\partial|\beta_j|,
\end{displaymath}
where $F'(\beta_j;\mathcal{I}(\beta_j))$ is given by \eqref{e:slope} and
$\partial|\beta_j|$ is the subdifferential of the absolute value. We consider two cases: First, if $\hat{\beta}_j\neq 0$, then the optimality condition $0\in\partial F(\hat{\beta}_j)$ is equivalent to
\begin{displaymath}
    F'(\hat{\beta}_j;\mathcal{I}(\hat{\beta}_j)) + \lambda\alpha\sign(\hat{\beta}_j) = 0,
\end{displaymath}
which gives $F'(\hat{\beta}_j;\mathcal{I}(\hat{\beta}_j)) = -\lambda\alpha\sign(\hat{\beta}_j)$, so $|F'(\hat{\beta}_j;\mathcal{I}(\hat{\beta}_j))| = \lambda\alpha$. Hence \eqref{e:KKT_check} holds if and only if $\hat{\beta}_j$ is the minimizer, and no update is required when it holds. Second, if $\hat{\beta}_j=0$, then $\partial|0|=[-1,1]$, so the optimality condition $0\in\partial F(0)$ becomes
\begin{displaymath}
    0 \in F'(0;\mathcal{I}(0)) + \lambda\alpha s,\quad s\in[-1,1],
\end{displaymath}
where $\mathcal{I}(0)=\{i:|r_{ij}|\leq\delta/|X_{ij}|\}$ and the ridge term vanishes since $(1-\alpha)\hat{\beta}_j=0$. This holds if and only if $|F'(0;\mathcal{I}(0))|\leq\lambda\alpha$, which is \eqref{e:KKT_check} evaluated at $\hat{\beta}_j=0$.
\end{proof}

\end{document}